\documentclass{IEEEtran}
\usepackage{cite}
\usepackage{amsmath,amssymb,amsfonts}
\usepackage{amsthm}
\usepackage{graphicx}
\usepackage{textcomp}
\usepackage{array}
\usepackage[caption=false,font=normalsize,labelfont=sf,textfont=sf]{subfig}
\usepackage{textcomp}
\usepackage{stfloats}
\usepackage{url}
\usepackage{verbatim}
\usepackage{graphicx}
\usepackage{cite}
\usepackage{flushend}

\usepackage[linesnumbered,ruled]{algorithm2e}
\usepackage{algorithmicx}

\newtheorem{theorem}{Theorem}
\newtheorem{proposition}{Proposition}

\newtheorem{lemma}{Lemma}
\theoremstyle{definition}
\newtheorem{definition}{Definition}
\theoremstyle{remark}

\def\BibTeX{{\rm B\kern-.05em{\sc i\kern-.025em b}\kern-.08em
    T\kern-.1667em\lower.7ex\hbox{E}\kern-.125emX}}
\begin{document}

\title{Learning Predictive Safety Filter via\\Decomposition of Robust Invariant Set}

\author{Zeyang Li, Chuxiong Hu, Weiye Zhao, Changliu Liu
    \thanks{Zeyang Li and Chuxiong Hu are with the Department of Mechanical Engineering, Tsinghua University, Beijing 100084, China (email: li-zy21@mails.tsinghua.edu.cn; cxhu@tsinghua.edu.cn).}
    \thanks{Weiye Zhao and Changliu Liu are with the Robotics Institute, Carnegie Mellon University, Pittsburgh, PA 15217 USA (email: weiyezha@andrew.cmu.edu; cliu6@andrew.cmu.edu).}
    \thanks{Corresponding author: Chuxiong Hu and Changliu Liu.}}

% The paper headers
% \markboth{Journal of \LaTeX\ Class Files,~Vol.~14, No.~8, August~2021}%
% {Shell \MakeLowercase{\textit{et al.}}: A Sample Article Using IEEEtran.cls for IEEE Journals}

\maketitle

% \IEEEpubid{0000--0000/00\$00.00~\copyright~2021 IEEE}
% Remember, if you use this you must call \IEEEpubidadjcol in the second
% column for its text to clear the IEEEpubid mark.

\begin{abstract}

    Ensuring safety of nonlinear systems under model uncertainty and external disturbances is crucial, especially for real-world control tasks. Predictive methods such as robust model predictive control (RMPC) require solving nonconvex optimization problems online, which leads to high computational burden and poor scalability. Reinforcement learning (RL) works well with complex systems, but pays the price of losing rigorous safety guarantee.
    This paper presents a theoretical framework that bridges the advantages of both RMPC and RL to synthesize safety filters for nonlinear systems with state- and action-dependent uncertainty. We decompose the robust invariant set (RIS) into two parts: a target set that aligns with terminal region design of RMPC, and a reach-avoid set that accounts for the rest of RIS. We propose a policy iteration approach for robust reach-avoid problems and establish its monotone convergence. This method sets the stage for an adversarial actor-critic deep RL algorithm, which simultaneously synthesizes a reach-avoid policy network, a disturbance policy network, and a reach-avoid value network. The learned reach-avoid policy network is utilized to generate nominal trajectories for online verification, which filters potentially unsafe actions that may drive the system into unsafe regions when worst-case disturbances are applied. We formulate a second-order cone programming (SOCP) approach for online verification using system level synthesis, which optimizes for the worst-case reach-avoid value of any possible trajectories. The proposed safety filter requires much lower computational complexity than RMPC and still enjoys persistent robust safety guarantee. The effectiveness of our method is illustrated through a numerical example.

\end{abstract}

\begin{IEEEkeywords}
    Reinforcement Learning, Safety Filter, Model Predictive Control, Robustness
\end{IEEEkeywords}

\section{Introduction}

\IEEEPARstart {S}{afety} remains a paramount concern for various intricate real-world control tasks \cite{ames2019control, zhao2022back, lin2023real, zhou2023intelligent}.
Safety filters, acting as the ultimate safeguard for cyber-physical systems, adjust and modify the potentially unsafe control inputs so that the system states remain within the constraint sets \cite{liu2014control, ames2016control}.
While there is extensive research on safety filter design for nonlinear systems \cite{ames2019control}, guaranteeing safety in the face of model uncertainty and external disturbances remains a significant and challenging open problem \cite{wei2022persistently, breeden2023robust, wabersich2023data}.

The key to maintaining safety under disturbances is to constrain the system states inside robust invariant sets (RIS). The maximal RIS is referred to as viability kernel, out of which the system will inevitably violate the safety constraints under worst-case disturbances, regardless of the control inputs \cite{blanchini2008set}.
Hamilton-Jacobi reachability analysis is a fundamental theoretic tool for solving RIS, represented by the zero-sublevel set of safety value functions \cite{mitchell2005time, bansal2017hamilton}. By solving the Hamilton-Jacobi-Isaacs (HJI) partial differential equation (PDE), one obtains the optimal safety value function (identifying the maximal RIS) and the optimal safety policy (the minimax actions for the Hamiltonian).
A straightforward approach for robust safe control is using the solution of HJI PDE to filter unsafe control inputs \cite{fisac2018general, chen2021fastrack}.
However, the HJI PDE-based approach suffers from the curse of dimensionality, since it requires gridding the state spaces, which are intractable for high-dimensional systems \cite{bansal2017hamilton}.

Beyond addressing the HJI PDE, two predominant methodologies emerge for ensuring safety under disturbances in a more tractable manner: value-based approach and policy-based approach. The value-based approach implements control correction through online convex optimization (e.g., quadratic programming), leveraging values and derivatives of robustified energy functions, such as robust control barrier function and robust safety index \cite{breeden2023robust, buch2021robust, emam2022safe, wei2022persistently, wei2023robust}. The policy-based approach (also referred to as predictive methods \cite{wabersich2023data}) evaluates if a state can be safely guided back to terminal regions with some failsafe policy under worst-case disturbances, such as robust model predictive control (RMPC) and its variants \cite{koller2018learning, kohler2020computationally, wabersich2018linear, georgiou2022computationally, wabersich2021predictive}. However, both strategies come with notable limitations. The value-based approach hinges on the existence of valid robustified energy functions, which can be challenging to design or validate, otherwise the online optimization procedure can easily become infeasible, compromising the safety guarantee. As a result, hand-crafted energy functions tend to be overly conservative to maintain feasibility, leading to performance degeneration \cite{liu2023safe}. The policy-based approach demands computing a failsafe policy, typically synthesized through RMPC. This is in general computationally prohibitive, as RMPC inherently involves solving nonconvex optimization problems.

On the other hand, learning-based approaches have demonstrated empirical efficacy for safe nonlinear control. A promising line of work is safe reinforcement learning (safe RL) \cite{garcia15a, zhao2023state, achiam2017constrained, chow2017risk}. Its objective is to learn a policy that not only maximizes the reward but also adheres to safety constraints. Some safe RL methods also incorporate synthesizing neural network representations of energy functions or safety value function (of HJI PDE), to facilitate the constrained policy optimization process \cite{ma2022joint, yu2022reachability, li2023robust}. Another line of work is utilizing supervised learning techniques to synthesize neural energy functions, subsequently deploying them for online safety filtering \cite{qin2021learning, liu2023safe, dawson2022safe}. The datasets are collected through sampling across both safe and unsafe regions, and the neural approximators are optimized with specific loss functions that encapsulate the intrinsic properties of energy functions. Learning-based approaches typically enjoy considerable scalability and empirically work well in certain examples, credited to the representational power of neural networks. Nevertheless, there are also significant limitations for learning-based methods. First, most learning-based methods cannot generalize over different model uncertainty and external disturbances, resulting in vulnerable policies that are sensitive to discrepancies between training and deploying environments. Addressing robustness in safe RL or energy function learning remains a topic of active research and demands further exploration \cite{dawson2022safe, li2023robust}. Second, the merit of enhanced scalability comes with the price of losing theoretical safety guarantee. The unsafe rates of learned control policies or safety filters are actually nonzero in many cases \cite{ma2022joint}, which can cause damage to the systems in real-world control tasks.

\textbf{Contribution}: In this work, we establish a theoretical framework for the synthesis of policy-based safety filters, applicable to systems with state- and action-dependent uncertainty. Our method is able to retain the scalability and non-conservativeness of learning-based methods, yet uphold the rigorous safety guarantee of predictive methods such as RMPC. The (maximal) RIS of a system is decomposed into two components: a target set that aligns with terminal region design of RMPC, and a reach-avoid set that encapsulates the rest of RIS. Inspired by Hamilton-Jacobi formulation for reach-avoid differential games, we propose a policy iteration approach for robust reach-avoid problems and establish its monotone convergence to the maximal reach-avoid sets. This method paves the way for an adversarial actor-critic deep RL algorithm, which simultaneously synthesizes a reach-avoid policy network, a disturbance policy network, and a reach-avoid value network. The learned reach-avoid policy network is utilized to generate nominal trajectories for online verification, which filters potentially unsafe actions that may drive the system into unsafe regions when worst-case disturbances are applied. Drawing inspiration from tube-based MPC, a second-order cone programming (SOCP) approach for online verification is proposed using system level synthesis (SLS), which computes the robust invariant tube with the smallest reach-avoid value, taking advantage of additional linear stabilizing controllers. The effectiveness of our method is illustrated through a numerical example in Section \ref{Numerical Example}.

\textbf{Notation}: For a vector $a\in\mathbb{R}^p$, we denote its $i$-th component as $a^i$. For a matrix $A\in\mathbb{R}^{p\times q}$, we denote its $i$-th row as $A^{i,:}$ and the $j$-th to $k$-th element in the $i$-th row as $A^{i,j:k}$. The aforementioned indexes may also be used as subscripts when the superscripts are occupied by other information. The block diagonal matrix consisting of $A_1,\dots,A_n$ is denoted as ${\rm blkdiag}(A_1,\cdots,A_n)$. The diagonal matrix consisting of $a_1,\cdots,a_n$ is denoted as ${\rm diag}(a_1,\cdots,a_n)$. $0_{p,q}$ denote the zero matrix with $p$ rows and $q$ columns. $0_{p}$ and $I_{p}$ denote the square zero matrix and identity matrix with size $p$, respectively. We may drop the subscript and use $I$ for identity matrices with appropriate dimensions.
Let $\left\|\cdot\right\|_{1}$, $\left\|\cdot\right\|_{2}$ and $\left\|\cdot\right\|_{\infty}$ denote the one-norm, two-norm, infinity-norm of a vector or a matrix, respectively. Let $\mathcal{B}_{\infty}^{p}$ denote the unit ball with infinity-norm, i.e., $\mathcal{B}_{\infty}^p=\left\{x \in \mathbb{R}^p \mid\|x\|_{\infty} \leq 1\right\}$. The Kronecker product between matrix $A$ and $B$ is denoted as $A\otimes B$.
A block lower-triangular matrix (we refer to the set of such matrices as $\mathcal{L}^{T,p\times q}$) is denoted as
\begin{equation}
    \mathbf{R}=\begin{pmatrix}
        \mathbf{R}^{1,1} & & & \\
        \mathbf{R}^{2,1} & \mathbf{R}^{2,2} & & \\
        \vdots & \ddots & \ddots & \\
        \mathbf{R}^{T, 1} & \cdots & \mathbf{R}^{T, T-1} & \mathbf{R}^{T, T}
        \end{pmatrix},
\end{equation}
in which $\mathbf{R}^{i,j}\in\mathbb{R}^{p\times q}$ is a matrix and $\mathbf{R}^{i,j:k}$ denotes $[\mathbf{R}^{i,j},\cdots,\mathbf{R}^{i,k}]$ (a slight abuse of notation with aforementioned $A^{i,j:k}$).

\section{Related Work}
In this section, we highlight representative works in safe control for systems subject to model uncertainty or disturbances. The first and second parts discuss the value-based and policy-based approaches, respectively, for constructing safety filters with assured safety guarantees. The third part covers robust safe control methods that rely on solving the HJI PDE. The fourth part covers learning-based methods that enable scalable synthesis of robust safe controllers.
The pros and cons of these works are discussed in the Introduction, so we will not repeat them here.

\textbf{Value-based Safety Filters}: Online control corrections are enforced based on conditions on values and derivatives of robustified energy functions.
Breeden et al. propose a methodology for constructing robust control barrier functions for nonlinear control-affine systems with bounded disturbances \cite{breeden2023robust}.
Buch et al. present a robust control barrier function approach to handle sector-bounded uncertainties and utilize second-order cone programming for online optimization \cite{buch2021robust}.
Wei et al. propose a safety index synthesis method for systems with bounded state-dependent uncertainties and use convex semi-infinite programming to filter unsafe actions \cite{wei2022persistently}.
Wang et al. integrate the disturbance observer with control barrier functions for systems with external disturbances \cite{wang2023disturbance}.
Wei et al. introduce a robust safe control framework for systems with multi-modal uncertainties and derive the corresponding safety index synthesis method \cite{wei2023robust}.

\textbf{Policy-based Safety Filters}: Online control corrections are enforced based on whether a state can be safely guided back to terminal regions with some failsafe policy under worst-case disturbances. The failsafe policies are typically synthesized with RMPC or its variants.
Koller et al. utilize RMPC techniques for safe exploration of nonlinear systems with Gaussian process model uncertainties \cite{koller2018learning}.
Wabersich et al. propose a model predictive safety filter (MPSF) for disturbed nonlinear systems, replacing the objective of RMPC with the extent of safety filter intervention \cite{wabersich2021predictive}.
Leeman et al. propose an MPSF framework for linear systems with additive disturbances, utilizing system level synthesis techniques to reduce conservativeness \cite{leeman2023predictive}.
Bastani et al. propose a statistical model predictive shielding framework for nonlinear systems by training a recovery policy and enforcing sampling-based verification methods from RMPC \cite{bastani2021safe}.

\textbf{HJI PDE-based Robust Safe Control}: The HJI PDE is formulated and solved offline. The solution is then queried online to filter unsafe control inputs.
Fisac et al. propose a safety framework that utilizes the solution of HJI PDE and data-driven Bayesian inference to construct high-probability guarantees for the training process of learning-based algorithms \cite{fisac2018general}. Chen et al. propose a framework that allows for real-time motion planning with simplified system dynamics, in which an HJI PDE is formulated to obtain the tracking error bound \cite{chen2021fastrack}.

\textbf{Learning-based Robust Safe Control}: Reinforcement learning or supervised learning techniques are incorporated into the synthesis of robust safe control.
% Bortolussi et al. present an adversarial training method to synthesize robust safe controllers utilizing generative adversarial networks \cite{bortolussi2021adversarial}.
Dawson et al. develop a supervised learning approach to train a neural robust Lyapunov-barrier function for online control correction \cite{dawson2022safe}.
Liu et al. propose a robust training framework for safe RL under observational perturbations \cite{liu2022robustness}.
Li et al. propose an RL method for solving robust invariant sets with neural networks and use them for constrained policy optimization \cite{li2023robust}.

\section{Background}

\subsection{Hamilton–Jacobi Formulation for Robust Reach–Avoid Problems}

Hamilton-Jacobi formulation for robust reach-avoid problems provides a theoretical verification method for continuous-time nonlinear systems, ensuring they reach a designated target set while evading dangerous sets, even under the worst-case disturbances \cite{margellos2011hamilton}. The control inputs and disturbances are competing against each other in a differential game. The system dynamics is given by
\begin{equation}
    \dot{x}=f(x,u,a),\quad x\in \mathcal{S} , u\in \mathcal{U} , d\in \mathcal{D},
\end{equation}
in which $\mathcal{S}$ denotes the state space, $\mathcal{U}$ denotes the input space, $\mathcal{D}$ denotes disturbance space. The state flow map under the given control policy $\pi:\mathcal{S}\rightarrow\mathcal{U}$ and disturbance policy $\mu:\mathcal{S}\rightarrow\mathcal{D}$ is defined as
\begin{equation}
    \Phi _{f_{\pi ,\mu}}(x,t_1,t_2)=x+\int_{t_1}^{t_2}{f}(x(t),\pi (x(t)),\mu (x(t)))dt.
\end{equation}
The safety constraint for the system is $h(x)\leq0$. Therefore, the set of states that we want to avoid is given by $\mathcal{A}=\left\{x\mathrel{}\middle|\mathrel{}h(x)>0\right\}$. The target set that we want to reach is denoted by $\mathcal{R}=\left\{x\mathrel{}\middle|\mathrel{}l(x)\leq0\right\}$. The key is to define a value function
\begin{equation}
    \label{value function in Hamilton-Jacobi reach-avoid analysis}
    V(x,t)=\min _{\pi} \max _{\mu} \left\{V^{\pi,\mu}(x,t)\right\},
\end{equation}
in which $V^{\pi,\mu}(x,t)$ is given by
\begin{equation}
    \min _{\tau_1 \in[t, T]} \max \left\{l (\Phi _{f_{\pi ,\mu}}(x,\tau_1, t)), \max _{\tau_2 \in\left[t, \tau_1\right]} h(\Phi _{f_{\pi ,\mu}}(x,\tau_2, t))\right\}.
\end{equation}
$T$ denotes the time horizon.
The value function satisfies the following PDE \cite{margellos2011hamilton}:
\begin{equation}
    \label{HJ reach-avoid PDE}
    \begin{aligned}
        & \max \left\{h(x)-V(x, t), \frac{\partial V}{\partial t}(x, t)\right. \\
        & \left.+\min \left\{0, \max _{u \in \mathcal{U}} \min _{a \in \mathcal{A}} \frac{\partial V}{\partial x}(x, t) f(x, u, v)\right\}\right\}=0,
    \end{aligned}
\end{equation}
and the boundary condition is given by $V(x,T)=\max\left\{l(x),h(x)\right\}$.

Solving (\ref{HJ reach-avoid PDE}) requires discretizing the state space, which suffers from the curse of dimensionality. Hsu et al. propose an RL approach for solving reach-avoid problems \cite{hsu2021safety}. However, they assume that there is no disturbance in the system. It remains challenging to formulate (\ref{HJ reach-avoid PDE}) with disturbances in the RL setting.

\subsection{System Level Synthesis}

The framework of system level synthesis (SLS) establishes the equivalence of optimizing over feedback controllers and directly optimizing over system behaviors for linear time-varying (LTV) systems \cite{anderson2019system}. It has demonstrated efficacy in constrained robust and optimal controller design.

Consider a discrete-time LTV system with state $x\in\mathbb{R}^{n_x}$, input $u\in\mathbb{R}^{n_u}$, and disturbance $w\in\mathcal{B}_{\infty}^{n_x}\subset\mathbb{R}^{n_x}$:
\begin{equation}
    \label{LTV system}
    x_{k+1}=A_k x_k+B_k u_k+ \Sigma_k w_k,\quad k=0,1,\cdots,T-1.
\end{equation}
in which $A_k\in\mathbb{R}^{n_x\times n_x}$, $B_k\in\mathbb{R}^{n_x\times n_u}$, and the disturbance filter $\Sigma_k = {\rm diag}(\sigma_{k,1},\cdots,\sigma_{k,n_x})\in\mathbb{R}^{n_x\times n_x}$. The time horizon is denoted by $T$. The initial conditions are defined as $x_0=0$ and $u_0=0$.

Stacking the states, inputs, and disturbances of all timesteps as $\mathbf{x}=[x_1^{\top},\cdots,x_T^{\top}]^{\top}$, $\mathbf{u}=[u_1^{\top},\cdots,u_T^{\top}]^{\top}$, $\mathbf{w}=[w_0^{\top},\cdots,w_{T-1}^{\top}]^{\top}$, respectively,
the LTV system (\ref{LTV system}) can be written in the following compact form:
\begin{equation}
    \label{LTV system compact form}
    \mathbf{x}=\mathcal{Z} \mathbf{A} \mathbf{x}+\mathcal{Z} \mathbf{B} \mathbf{u}+\mathbf{\Sigma}\mathbf{w},
\end{equation}
in which $\mathbf{A}={\rm blkdiag}(A_1,\cdots,A_{T-1},0_{n_x,n_x})\in\mathbb{R}^{T n_x\times T n_x}$, $\mathbf{B}={\rm blkdiag}(B_1,\cdots,B_{T-1},0_{n_x,n_u})\in\mathbb{R}^{T n_x\times T n_u}$, $\mathbf{\Sigma}={\rm blkdiag}(\Sigma_0,\cdots,\Sigma_{T-1})\in\mathbb{R}^{T n_x\times T n_x}$, and $\mathcal{Z}\in\mathbb{R}^{T n_x\times T n_x}$ is a block downshift operator given by
\begin{equation}
    \mathcal{Z}=
    \begin{bmatrix}
        0_{n_{x}} & 0_{n_{x}} & \ldots & 0_{n_{x}} \\
        I_{n_{x}} & 0_{n_{x}} & \ldots & 0_{n_{x}} \\
        \vdots & \ddots & \ddots & \vdots \\
        0_{n_{x}} & \ldots & I_{n_{x}} & 0_{n_{x}}
    \end{bmatrix}.
\end{equation}

Assume that the control inputs follow the causal linear state-feedback design, i.e., $\mathbf{u}=\mathbf{K}\mathbf{x}$, in which $\mathbf{K} \in \mathcal{L}^{T, n_{u} \times n_{x}}$. The closed-loop system can be written as
\begin{equation}
    \label{system response for LTV}
    \begin{bmatrix}
        \mathbf{x} \\
        \mathbf{u}
    \end{bmatrix}
    =
    \begin{bmatrix}
        (I-\mathcal{Z} \mathbf{A}-\mathcal{Z} \mathbf{B} \mathbf{K})^{-1} \\
        \mathbf{K}(I-\mathcal{Z} \mathbf{A}-\mathcal{Z} \mathbf{B} \mathbf{K})^{-1}
    \end{bmatrix}
     \mathbf{\Sigma} \mathbf{w}=\begin{bmatrix} \mathbf{\Phi}_x \\ \mathbf{\Phi}_u \end{bmatrix} \mathbf{w}.
\end{equation}
$\mathbf{\Phi}_{x}\in\mathcal{L}^{T, n_{x} \times n_{x}}$ and $\mathbf{\Phi}_{u}\in\mathcal{L}^{T, n_{u} \times n_{x}}$ are referred to as system response. The following lemma allows for directly optimizing over system responses instead of linear state-feedback control inputs.

\begin{lemma}[system level synthesis \cite{chen2022robust}]
    \label{system level synthesis}
    For an LTV system (\ref{LTV system}) with a causal linear state-feedback controller $\mathbf{u}=\mathbf{K}\mathbf{x}$, $\mathbf{K} \in \mathcal{L}^{T, n_{u} \times n_{x}}$, the following properties hold.
    \begin{enumerate}
        \item The affine subspace defined by
            \begin{equation}
                \label{affine subspace for LTV system}
                \begin{bmatrix}
                    I-\mathcal{Z} \mathbf{A} & -\mathcal{Z} \mathbf{B}
                \end{bmatrix}
                    \begin{bmatrix} \mathbf{\Phi}_x \\ \mathbf{\Phi}_u \end{bmatrix}=\mathbf{\Sigma}
            \end{equation}
            parameterizes all possible system responses satisfying (\ref{system response for LTV}), where $\mathbf{\Phi}_{x}\in\mathcal{L}^{T, n_{x} \times n_{x}}$ and $\mathbf{\Phi}_{u}\in\mathcal{L}^{T, n_{u} \times n_{x}}$.
        \item For any $\left\{\mathbf{\Phi}_x,\mathbf{\Phi}_u\right\}$ satisfying (\ref{affine subspace for LTV system}), $\mathbf{\Phi}_x$ is invertible and the controller $\mathbf{K} =\mathbf{\Phi}_u \mathbf{\Phi}_x^{-1}$ achieves the system response (\ref{system response for LTV}).
    \end{enumerate}
\end{lemma}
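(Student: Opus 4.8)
The plan is to exploit the (strictly) block-lower-triangular structure induced by the downshift operator $\mathcal{Z}$, together with the fact that, for a fixed block size, block-lower-triangular matrices form an algebra closed under products and under inversion (such a matrix being invertible exactly when all of its diagonal blocks are invertible, with a block-lower-triangular inverse). I would first record the key structural fact: since $\mathbf{A},\mathbf{B}$ are block diagonal and $\mathbf{K}\in\mathcal{L}^{T,n_u\times n_x}$, the product $\mathbf{B}\mathbf{K}$ is block lower triangular, and left-multiplication by $\mathcal{Z}$ sends any block-lower-triangular matrix to a \emph{strictly} block-lower-triangular one. Hence $\mathcal{Z}\mathbf{A}+\mathcal{Z}\mathbf{B}\mathbf{K}$ is strictly block lower triangular, therefore nilpotent of index at most $T$, so $I-\mathcal{Z}\mathbf{A}-\mathcal{Z}\mathbf{B}\mathbf{K}$ is invertible with block-lower-triangular inverse $\sum_{i=0}^{T-1}(\mathcal{Z}\mathbf{A}+\mathcal{Z}\mathbf{B}\mathbf{K})^{i}$; this makes the closed-loop map (\ref{system response for LTV}) well defined and shows $\mathbf{\Phi}_x\in\mathcal{L}^{T,n_x\times n_x}$ and $\mathbf{\Phi}_u=\mathbf{K}\mathbf{\Phi}_x\in\mathcal{L}^{T,n_u\times n_x}$.

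For the forward inclusion of item 1, I would substitute $\mathbf{\Phi}_x=(I-\mathcal{Z}\mathbf{A}-\mathcal{Z}\mathbf{B}\mathbf{K})^{-1}\mathbf{\Sigma}$ and $\mathbf{\Phi}_u=\mathbf{K}\mathbf{\Phi}_x$ into the left-hand side of (\ref{affine subspace for LTV system}) and collect terms, obtaining $[(I-\mathcal{Z}\mathbf{A})-\mathcal{Z}\mathbf{B}\mathbf{K}](I-\mathcal{Z}\mathbf{A}-\mathcal{Z}\mathbf{B}\mathbf{K})^{-1}\mathbf{\Sigma}=\mathbf{\Sigma}$, so every controller-induced system response lies in the affine subspace. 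For the converse (which simultaneously delivers item 2), I would take any $\{\mathbf{\Phi}_x,\mathbf{\Phi}_u\}\in\mathcal{L}^{T,n_x\times n_x}\times\mathcal{L}^{T,n_u\times n_x}$ satisfying (\ref{affine subspace for LTV system}), i.e. $\mathbf{\Phi}_x=\mathcal{Z}\mathbf{A}\mathbf{\Phi}_x+\mathcal{Z}\mathbf{B}\mathbf{\Phi}_u+\mathbf{\Sigma}$. Since $\mathcal{Z}\mathbf{A}\mathbf{\Phi}_x$ and $\mathcal{Z}\mathbf{B}\mathbf{\Phi}_u$ are strictly block lower triangular, reading off diagonal blocks gives that the diagonal blocks of $\mathbf{\Phi}_x$ equal those of $\mathbf{\Sigma}$; using invertibility of $\mathbf{\Sigma}$, $\mathbf{\Phi}_x$ is block lower triangular with invertible diagonal blocks, hence invertible with $\mathbf{\Phi}_x^{-1}\in\mathcal{L}^{T,n_x\times n_x}$. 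Then $\mathbf{K}:=\mathbf{\Phi}_u\mathbf{\Phi}_x^{-1}\in\mathcal{L}^{T,n_u\times n_x}$ is causal, and from $\mathbf{K}\mathbf{\Phi}_x=\mathbf{\Phi}_u$ plus (\ref{affine subspace for LTV system}) we get $(I-\mathcal{Z}\mathbf{A}-\mathcal{Z}\mathbf{B}\mathbf{K})\mathbf{\Phi}_x=(I-\mathcal{Z}\mathbf{A})\mathbf{\Phi}_x-\mathcal{Z}\mathbf{B}\mathbf{\Phi}_u=\mathbf{\Sigma}$; invertibility of $I-\mathcal{Z}\mathbf{A}-\mathcal{Z}\mathbf{B}\mathbf{K}$ (by the same nilpotency argument, now applied to this $\mathbf{K}$) then yields $\mathbf{\Phi}_x=(I-\mathcal{Z}\mathbf{A}-\mathcal{Z}\mathbf{B}\mathbf{K})^{-1}\mathbf{\Sigma}$ and $\mathbf{\Phi}_u=\mathbf{K}(I-\mathcal{Z}\mathbf{A}-\mathcal{Z}\mathbf{B}\mathbf{K})^{-1}\mathbf{\Sigma}$, i.e. $\{\mathbf{\Phi}_x,\mathbf{\Phi}_u\}$ is exactly the system response of $\mathbf{K}$.

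The algebraic manipulations are routine; the step that needs care is establishing invertibility of $\mathbf{\Phi}_x$ in the converse direction — this does not follow from the affine constraint by itself, but from isolating the diagonal blocks via the strict-lower-triangularity that $\mathcal{Z}$ forces onto $\mathcal{Z}\mathbf{A}\mathbf{\Phi}_x$ and $\mathcal{Z}\mathbf{B}\mathbf{\Phi}_u$, combined with $\mathbf{\Sigma}$ having invertible diagonal blocks. I would also flag that the nilpotency/Neumann-series observation is reused twice (once to make the original closed loop well posed, once to invert $I-\mathcal{Z}\mathbf{A}-\mathcal{Z}\mathbf{B}\mathbf{K}$ for the reconstructed controller), so it is worth stating as a small preliminary claim before the two directions.
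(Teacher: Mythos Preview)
The paper does not prove this lemma; it simply cites it from \cite{chen2022robust} as a known result about system level synthesis. Your argument is the standard one for the SLS parameterization and is correct: the key observations are exactly those you identify --- the strict block-lower-triangularity induced by $\mathcal{Z}$, the resulting nilpotency giving invertibility of $I-\mathcal{Z}\mathbf{A}-\mathcal{Z}\mathbf{B}\mathbf{K}$ via a finite Neumann sum, and the diagonal-block matching $\mathbf{\Phi}_x^{k,k}=\Sigma_{k-1}$ that yields invertibility of $\mathbf{\Phi}_x$.

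One small caveat worth keeping explicit (as you already flag): the invertibility of $\mathbf{\Phi}_x$ in item~2 genuinely relies on each $\Sigma_k$ being nonsingular. The paper's lemma statement does not make this hypothesis explicit, but in the downstream SOCP (\ref{SOCP formulation}) the $\sigma_{k,i}$ are strictly positive by construction (constraints (\ref{trivial disturbance filter overbound}) and (\ref{disturbance filter overbound}) with nontrivial disturbance), so the assumption is satisfied where the lemma is actually used. Your proof would be strengthened by stating this as a standing hypothesis rather than invoking it mid-argument.
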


\section{Problem Formulation and Algorithm Design}

\subsection{Problem Formulation and Motivation}

In this work, we consider discrete-time nonlinear systems with state- and action-dependent perturbations:
\begin{equation}
    \label{nonlinear system with disturbance}
    \begin{aligned}
        x_{k+1}&=F(x_k,u_k,d_k)\\&=f(x_k,u_k)+g(x_k,u_k)d_k.
    \end{aligned}
\end{equation}
in which state $x\in\mathbb{R}^{n_x}$, input $u\in\mathbb{R}^{n_u}$, and disturbance $d\in\mathbb{R}^{n_d}$. Note that $g(x,u)d$ can also be viewed as model uncertainty. The functions $f:\mathcal{R}^{n_x}\times \mathcal{R}^{n_u}\rightarrow \mathcal{R}^{n_x}$ and $g:\mathcal{R}^{n_x}\times \mathcal{R}^{n_u}\rightarrow \mathcal{R}^{n_x\times n_d}$ are assumed to be smooth.

The state constraint set $\mathcal{X}$ is assumed to be the following polytopic form:
\begin{equation}
    \label{state constraint set}
    \mathcal{X}=\left\{x\in\mathbb{R}^{n_x}\mathrel{}\middle|\mathrel{} H_x x\leq h_x\right\}.
\end{equation}
The input constraint set $\mathcal{U}$ is assumed to be the following polytopic form:
\begin{equation}
    \label{input constraint set}
    \mathcal{U}=\left\{u\in\mathbb{R}^{n_u}\mathrel{}\middle|\mathrel{} H_u u\leq h_u\right\}.
\end{equation}
The disturbance set $\mathcal{D}$ is assumed to be the following polytopic form:
\begin{equation}
    \label{disturbance set}
    \mathcal{D}=\left\{d\in\mathbb{R}^{n_d}\mathrel{}\middle|\mathrel{} H_d d\leq h_d\right\}.
\end{equation}

We also assume having access to a terminal safe set $\mathcal{R}\subset \mathcal{X}$ defined as
\begin{equation}
    \label{terminal safe set}
    \mathcal{R}=\left\{x\in\mathbb{R}^{n_x}\mathrel{}\middle|\mathrel{} R_x x\leq r_x\right\},
\end{equation}
and a corresponding terminal safe controller $\pi_{{\rm terminal}}: \mathcal{X}\rightarrow\mathcal{U}$ that can keep the system safe for any initial state $x_0\in\mathcal{R}$, i.e.,
\begin{equation}
    \label{terminal safe controller requirement}
    \begin{aligned}
        &x_k \in \mathcal{X},\quad \forall k\geq 0,\quad\forall d_k\in\mathcal{D},\\&
        \begin{aligned}
            \text{s.t.}\quad&x_0\in \mathcal{R}, \quad u_k=\pi_{{\rm terminal}}(x_k),\\& x_{k+1}=f(x_k,u_k)+g(x_k,u_k)d_k.
        \end{aligned}
    \end{aligned}
\end{equation}
Note that the assumption on terminal safe sets is standard in MPC-related settings. For example, terminal safe regions can be obtained by linearizing the system around its equilibriums \cite{chen1998quasi, althoff2008reachability}.
Also note that here we relaxed the requirement of the terminal safe controller $\pi_{{\rm terminal}}$ in the sense that it does not need to render the terminal set $\mathcal{R}$ forward invariant, i.e., $x_k \in \mathcal{X}$ instead of $x_k \in \mathcal{R}$ in (\ref{terminal safe controller requirement}).

Safety filters ensure that the system trajectories are always satisfying safety constraints, by modifying dangerous actions that can lead to unsafe states \cite{liu2014control, wabersich2023data}. They can be combined with any nominal controllers, such as human-designed control laws and RL policies. A safety filter can be viewed as a function $F_{\rm filter}:\mathbb{R}^{n_x}\times\mathbb{R}^{n_u}\rightarrow\mathcal{U}$, such that for initial states $x_0\in \mathcal{S}_{{\rm safe}}$, the safety filter always returns safe actions, i.e.,
\begin{equation}
    \label{safety filter problem formulation}
    \begin{aligned}
        &x_k \in \mathcal{S}_{{\rm safe}},\quad \forall k\geq 0, \quad\forall d_k\in\mathcal{D},\\&
        \begin{aligned}
            \text{s.t.}\quad&x_0\in \mathcal{S}_{{\rm safe}},\quad u_k=F_{\rm filter}(x_k, u_{{\rm nom}})\\& x_{k+1}=f(x_k,u_k)+g(x_k,u_k)d_k,
        \end{aligned}
    \end{aligned}
\end{equation}
in which $u_{{\rm nom}}$ denotes the outputs of a nominal controller. The safe set $\mathcal{S}_{{\rm safe}} \subseteq \mathcal{X}$ represents the collection of states such that (\ref{safety filter problem formulation}) is feasible. It is a robust invariant set (RIS) that is (implicitly) determined by the safety filter. (Technically it is a robust control invariant set. We omit the word control in this article for simplicity.)

Conservativeness is the key evaluation metric for a safety filter, which is mainly reflected by the size of the corresponding safe set $\mathcal{S}_{{\rm safe}}$. Larger safe set sizes indicate wider operational ranges for the system and its controllers, which result in a significant boost on controller performance. Therefore, the primary design principle for safety filter $F_{\rm filter}$ is to maximize its safe set. The theoretical maximal safe set that a safety filter can obtain is the maximal RIS of the system, which is very difficult to obtain in general (having to solve the HJI PDE).

Existing methods for constructing safety filters offer different trade-offs between conservativeness and online computational burden. The policy-based approach (e.g., RMPC) typically has larger safe sets than that of value-based approach (e.g., robust control barrier function), but requires solving non-convex optimization problems that can be computationally prohibitive.

In this work, we leverage RL tools to design a safety filter $F_{\rm filter}$ that requires significantly less computational effort than policy-based filters, while still retaining their benefits, i.e., large safe sets and rigorous safety guarantees.

\subsection{Algorithm Design}

We decompose the (maximal) RIS into two parts: a target set that aligns with terminal region design of RMPC, and a reach-avoid set that encapsulates the rest of RIS. The latter is derived using deep RL techniques, the specifics of which we will elaborate on in the following section. Three neural networks are trained concurrently: a reach-avoid policy network, a disturbance network, and a reach-avoid value network. The advantage of the RL approach lies in its scalability and suitability for complex systems. However, this method sacrifices the theoretical guarantee for safety. For instance, the learned reach-avoid policy network cannot guarantee to safely guide the system back to the target set under worst-case disturbances, even if the reach-avoid value network indicates a negative value—though, empirically, this is often the case with high probability.

To reconstruct safety guarantees, we propose a second-order cone programming (SOCP) approach for online safety verification, leveraging the tools of system level synthesis. Specifically, given a certain state, the learned reach-avoid policy network is utilized to generate a nominal trajectory. Using the SOCP method, we compute the worst-case reach-avoid value for trajectories that might deviate from the nominal one due to disturbances, taking advantage of additional linear state-feedback controllers. If this worst-case value is negative, then the system is guaranteed to be safe. The idea of this online verification is similar to RMPC. The key difference is that the nominal trajectory is directly generated by the learned reach-avoid policy network, while in RMPC approaches it is part of the decision variables to be optimized, resulting in non-convex optimization problems, which is much more computationally expensive.

\begin{figure*}[htbp]
    \centering
    \includegraphics[width=5.5in]{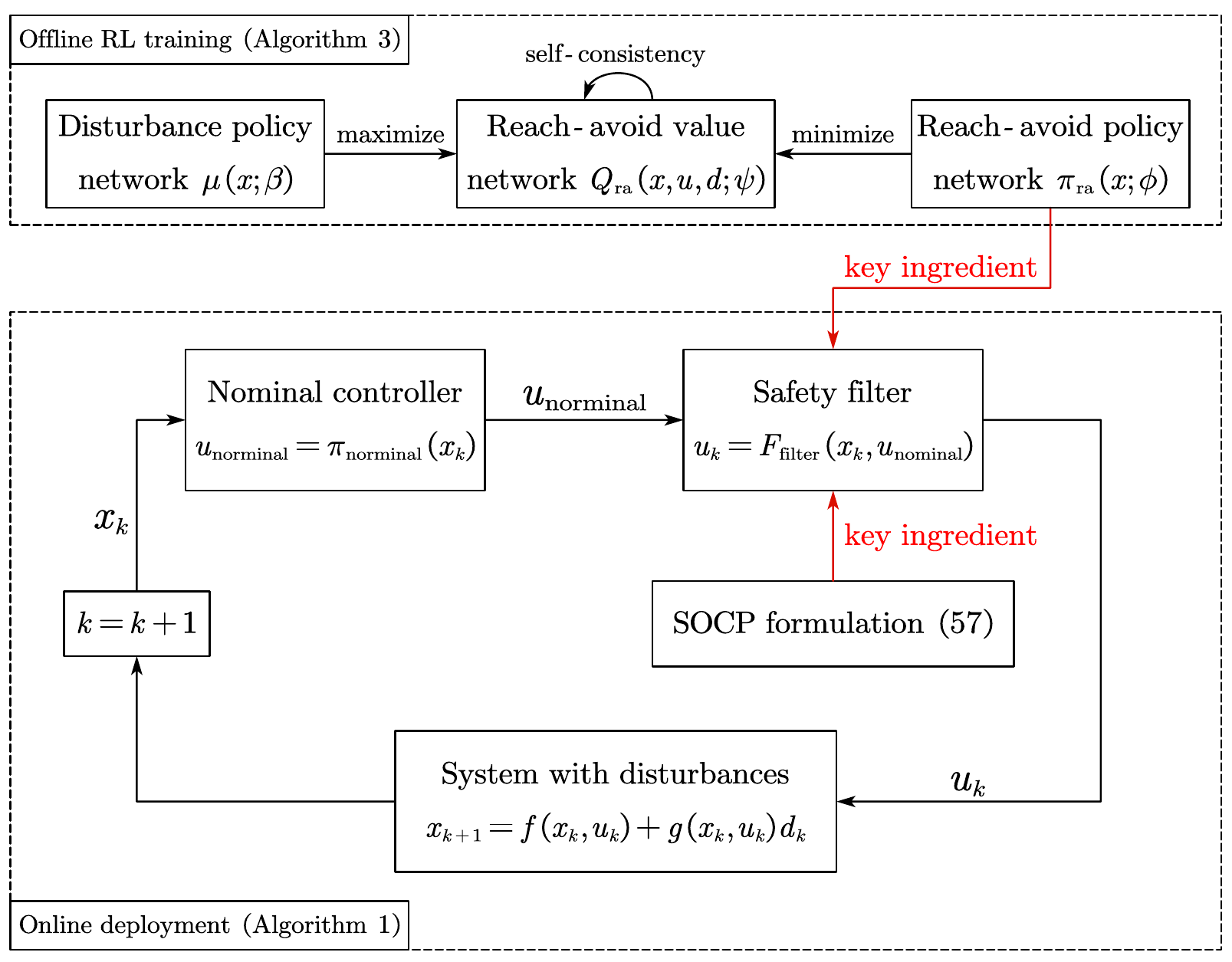}
    \caption{Illustration of the proposed methods, including offline RL training and online SOCP-based safety filtering.}
    \label{overall design}
\end{figure*}

\begin{algorithm}[ht]
    \label{Safety filter design}
    \caption{Online deployment of the proposed safety filter}
    \KwIn{learned reach-avoid policy $\pi_{\rm ra}$, potentially unsafe nominal controller $\pi_{\rm nominal}$, terminal safe controller $\pi_{\rm terminal}$.}

    Observe the current state $\overline{x}$ and get the nominal input $\overline{u}=\pi_{\rm nominal}(\overline{x})$. \label{start of first computation}

    Generate the nominal trajectory $\mathcal{T}$ using $\pi_{\rm ra}$ with $z_0=\overline{x}$ and $v_0=\overline{u}$, which is detailed in (\ref{generate nominal trajectory}).

    Solve the SOCP problem (\ref{SOCP formulation}). Obtain $V_{\rm ra}^*$, $\mathbf{\Phi}_u^*$ and ${\mathbf{\Phi}_x^*}$. (We assume $V_{\rm ra}^*\leq0$ for the first computation.) \label{end of first computation}

    Update $\mathbf{K}$ with $\mathbf{K}=\mathbf{\Phi}_u^* {\mathbf{\Phi}_x^*}^{-1}$. \label{update K}

    Execute $u=v_0$ at current state.

    \For{$k = 1, 2, \cdots$}{

        Observe the current state $\overline{x}$ and get the nominal input $\overline{u}=\pi_{\rm nominal}(\overline{x})$.

        Generate the nominal trajectory $\mathcal{T}$ using $\pi_{\rm ra}$ with $z_0=\overline{x}$ and $v_0=\overline{u}$, which is detailed in (\ref{generate nominal trajectory}). \label{generate nominal trajectory on-the-fly}

        Solve the SOCP problem (\ref{SOCP formulation}). Otain $V_{\rm ra}^*$, $\mathbf{\Phi}_u^*$ and ${\mathbf{\Phi}_x^*}$. \label{solve SOCP on-the-fly}

        \uIf{$V_{\rm ra}^*\leq 0$}{
            Reset $k=0$.

            Update the nominal trajectory $\mathcal{T}$ with the results obtained in line \ref{generate nominal trajectory on-the-fly}.

            Go to line \ref{update K} with system responses $\mathbf{\Phi}_u^*$ and ${\mathbf{\Phi}_x^*}$ obtained in line \ref{solve SOCP on-the-fly}.
        }
        \Else{

            Abandon the results of line \ref{generate nominal trajectory on-the-fly} and \ref{solve SOCP on-the-fly}.

            \uIf{$k\leq T$}{
                Execute $u=v_k+\Sigma_{j=1}^k{\mathbf{K}^{k,j}\Delta x_j}$ at current state.
            }
            \Else{
                Execute $u = \pi_{\rm terminal}(x_k)$ at current state.
            }
        }
    }
\end{algorithm}

The execution procedure of the proposed safety filter is summarized in Algorithm \ref{Safety filter design}. We assume that the first-time computation of the SOCP problem (from line \ref{start of first computation} to \ref{end of first computation}) yields a $V_{\rm ra}^*\leq0$, which is similar to the initial feasibility assumption for RMPC. We will show that the proposed safety filter guarantees persistent robust constraint satisfaction in Theorem \ref{persistent safety guarantee}.

We also provide a diagram that illustrates the overall design of the proposed methods in Fig. \ref{overall design}. The two key ingredients of our approach are the neural network reach-avoid policy trained offline and the SOCP formulation for online verification. Note that the SOCP formulation (\ref{SOCP formulation}) provides the worst-case reach-avoid value $V_{\rm ra}^*$ and the optimal linear feedback stabilizing controller $\mathbf{K}$ simultaneously. The latter will be used for at most $T$ following timesteps, if the SOCP results at these steps all have positive $V_{\rm ra}^*$. Therefore the timestep index $k$ is actually entangled with the safety filter $F_{\rm filter}$, which is omitted in Fig. \ref{overall design} for simplicity.

\section{RL for Robust Reach-Avoid Problems}

In this section, we present an RL approach for solving the robust reach-avoid problems. Note that the Hamilton-Jacobi formulation (\ref{HJ reach-avoid PDE}) is for continuous-time systems with a finite horizon. Here we address discrete-time systems under infinite horizon.

We consider a general setting for discrete-time systems with disturbances,
\begin{equation}
    \label{general nonlinear system with disturbance}
    x_{k+1}=F(x_k,u_k,d_k), \quad x\in \mathcal{S} , u\in \mathcal{U} , d\in \mathcal{D}.
\end{equation}
The control input $u$ follows policy $\pi:\mathcal{S}\rightarrow\mathcal{U}$ and disturbance $d$ follows policy $\mu:\mathcal{S}\rightarrow\mathcal{D}$. The safety constraint is $h(x)\leq0$. The target constraint is $l(x)\leq0$. The avoiding set $\mathcal{A}$ is defined as $\mathcal{A}=\left\{x\mathrel{}\middle|\mathrel{}h(x)>0\right\}$. The target set $\mathcal{R}$ is defined as $\mathcal{R}=\left\{x\mathrel{}\middle|\mathrel{}l(x)\leq0\right\}$.

The above settings cover the problem formulation in the previous section. (\ref{nonlinear system with disturbance}) is a special case of (\ref{general nonlinear system with disturbance}). (\ref{state constraint set}) implies that $h(x) = \max_i\left\{H_x^{i,:} x - h_x^i\right\}$ and (\ref{terminal safe set}) implies that $l(x) = \max_i\left\{R_x^{i,:} x - r_x^i\right\}$.

We define the following reach-avoid value function. Our approach is inspired by \cite{hsu2021safety, li2023robust}.

\begin{definition}[reach-avoid value functions]
    \hfill
    \label{reach-avoid value functions}
    \begin{enumerate}
        \item Let $\mathcal{T}(x,\tau)$ denote a trajectory of the system (\ref{general nonlinear system with disturbance}) starting at $x$ with a finite horizon $\tau$, i.e., $\mathcal{T}(x,\tau)=\left\{x_0=x,u_0,d_0,x_1,u_1,d_1,\cdots,x_\tau,u_\tau,d_\tau\right\}$. The reach-avoid value of a finite trajectory $\mathcal{T}(x,\tau)$ is defined as
        \begin{equation}
            V_{\rm ra}(\mathcal{T}(x,\tau)) = \max\left\{\max_{0\leq t\leq\tau,t\in \mathbb{N}}\left\{h(x_t)\right\}, l(x_{\tau})\right\}.
        \end{equation}
        \item The reach-avoid value function of a protagonist policy $\pi$ and an adversary policy $\mu$ is defined as
        \begin{equation}
            \label{definition of V_RA}
            V_{\rm ra}^{\pi,\mu}(x) = \min_{\tau\in\mathbb{N}}\left\{V_{\rm ra}(\mathcal{T}(x,\tau))\right\},
        \end{equation}
        in which the trajectory $\mathcal{T}(x,\tau)$ is obtained by starting at $x_0=x$ and choosing $u_t=\pi(x_t), a_t=\mu(x_t)$ for $0\leq t\leq \tau$.
        \item The reach-avoid value function of a protagonist policy $\pi$ is defined as
        \begin{equation}
            \label{definition of V_RA pi}
            V_{\rm ra}^{\pi}(x) = \max _{\mu} \left\{V_{\rm ra}^{\pi,\mu}(x)\right\}.
        \end{equation}
        \item The optimal reach-avoid value function is defined as
        \begin{equation}
            \label{definition of optimal V_RA}
            V_{\rm ra}^*(x) = \min _{\pi} \left\{V_{\rm ra}^{\pi}(x)\right\}.
        \end{equation}
    \end{enumerate}
\end{definition}

The reach-avoid value of a finite trajectory reflects whether the requirement of reaching $\mathcal{R}$ and avoiding $\mathcal{A}$ is fulfilled. For the reaching of $R$, the target constraint is imposed on the last timestep $\tau$. For the avoiding of $A$, the safety constraint is imposed on every timestep $0\leq t\leq\tau$. If $V_{\rm ra}(\mathcal{T}(x,\tau))\leq0$, the requirement of reaching and avoiding are both achieved in this finite trajectory $\mathcal{T}(x,\tau)$. For a pair of policy $\pi$ and $\mu$, the reach-avoid value function chooses the optimal horizon $\tau$ such that the reach-avoid value of the corresponding trajectory $\mathcal{T}(x,\tau)$ is minimum, which is the best performance that can be achieved on the goal of reach-avoid. The reach-avoid value function of a policy $\pi$ reflects its capability of reach-avoid under worst-case disturbances. The optimal reach-avoid value function is associated with the best policy for the goal of reach-avoid under worst-case disturbances. We further define the reach-avoid sets as follows.

\begin{definition}[reach-avoid sets]
    \hfill
    \begin{enumerate}
        \item The reach-avoid set for a policy $\pi$ is the zero-sublevel set of its reach-avoid value function, i.e.,
        \begin{equation}
            S_{\rm ra}^{\pi}=\left\{x \mid V_{\rm ra}^{\pi}(x) \leq 0\right\}.
        \end{equation}
        \item The maximal reach-avoid set is the zero-sublevel set of the optimal reach-avoid value function, i.e.,
        \begin{equation}
            S_{\rm ra}^*=\left\{x \mid V_{\rm ra}^*(x) \leq 0\right\}.
        \end{equation}
    \end{enumerate}
\end{definition}

For states outside the maximal reach-avoid set $S_{\rm ra}^*$, there is no policy that can drive the system to the target set while satisfying the safety constraints, under worst-case disturbances.

Since the reach-avoid value functions are defined on infinite horizon, they naturally hold a recursive structure with dynamic programming, which is referred to as the self-consistency condition, just like the common value functions in the standard RL setting.
We have the following theorem.

\begin{theorem}[self-consistency conditions for reach-avoid value functions]
    \label{self-consistency conditions for reach-avoid value functions}
    Suppose $x^{\prime}$ is the successive state of $x$. The following self-consistency conditions hold for reach-avoid value functions, i.e.,
    \begin{equation}
        \label{self-consistency condition of V_RA pi,mu}
        V_{\rm ra}^{\pi ,\mu}(x)=\max \left\{h(x), \min\left\{l(x),V_{\rm ra}^{\pi ,\mu}\left( x^{\prime} \right)\right\} \right\},
    \end{equation}
    \begin{equation}
        \label{self-consistency condition of V_RA pi}
        V_{\rm ra}^{\pi}(x)=\max \left\{h(x), \min\left\{l(x), \max_{d\in \mathcal{D}}V_{\rm ra}^{\pi}\left( x^{\prime} \right)\right\} \right\},
    \end{equation}
    \begin{equation}
        \label{self-consistency condition of V_RA}
        V_{\rm ra}^{*}(x)=\max \left\{h(x), \min\left\{l(x), \min_{u\in \mathcal{U}}\max_{d\in \mathcal{D}}V_{\rm ra}^{*}\left( x^{\prime} \right)\right\} \right\}.
    \end{equation}
\end{theorem}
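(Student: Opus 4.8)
The plan is to prove the three identities in the order stated, since each one reduces to the previous together with two elementary observations: the distributive identity $\min\{\max\{a,b\},\max\{a,c\}\}=\max\{a,\min\{b,c\}\}$, together with its $\inf$/$\sup$ versions $\inf_i\max\{a,b_i\}=\max\{a,\inf_i b_i\}$; and the monotonicity of the map $t\mapsto\max\{h(x),\min\{l(x),t\}\}$, which lets it commute with $\inf$ and $\sup$. The common engine is a one-step decomposition of the trajectory value: writing $x'$ for the successor of $x=x_0$ along a trajectory, one has $V_{\rm ra}(\mathcal{T}(x,0))=\max\{h(x),l(x)\}$ and, for $\tau\geq1$, $V_{\rm ra}(\mathcal{T}(x,\tau))=\max\{h(x),V_{\rm ra}(\mathcal{T}(x',\tau-1))\}$, because the running maximum of $h$ over $x_1,\dots,x_\tau$ together with $l(x_\tau)$ is exactly the reach-avoid value of the length-$(\tau-1)$ tail trajectory started at $x'$.

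For (\ref{self-consistency condition of V_RA pi,mu}) the policies $\pi$ and $\mu$ are fixed, so the tail of $\mathcal{T}(x,\tau)$ from $x'$ is again generated by $\pi,\mu$. I would split $\min_{\tau\in\mathbb{N}}$ into $\tau=0$ and $\tau\geq1$, apply the decomposition, push $\min_{\tau\geq1}$ through $\max\{h(x),\cdot\}$ to obtain $\max\{h(x),V_{\rm ra}^{\pi,\mu}(x')\}$, and finally combine this with the $\tau=0$ term $\max\{h(x),l(x)\}$ via the distributive identity; this yields the claim and is essentially bookkeeping.

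For (\ref{self-consistency condition of V_RA pi}) I would take $\max_\mu$ of (\ref{self-consistency condition of V_RA pi,mu}). The inequality "$\leq$" is immediate: for each $\mu$, $V_{\rm ra}^{\pi,\mu}(x')\leq V_{\rm ra}^\pi(x')\leq\max_{d\in\mathcal{D}}V_{\rm ra}^\pi(F(x,\pi(x),d))$, and monotonicity in the last slot propagates this. The reverse inequality is the crux. Fix $\bar d\in\mathcal{D}$, let $z=F(x,\pi(x),\bar d)$, and let $\nu$ be an adversary policy with $V_{\rm ra}^{\pi,\nu}(z)$ arbitrarily close to $V_{\rm ra}^\pi(z)$; define $\tilde\nu$ to equal $\bar d$ at $x$ and $\nu$ everywhere else. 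I would then show $V_{\rm ra}^{\pi,\tilde\nu}(x)\geq\max\{h(x),\min\{l(x),V_{\rm ra}^{\pi,\nu}(z)\}\}$: if the $\tilde\nu$-trajectory from $z$ never returns to $x$ it coincides with the $\nu$-trajectory from $z$ and the identity of the previous paragraph gives equality; if it does return to $x$, the $\tilde\nu$-trajectory from $x$ is periodic, and its reach-avoid value is a minimum of prefix values of the $\nu$-trajectory from $z$ over a strictly shorter horizon, hence no smaller than $V_{\rm ra}^{\pi,\nu}(z)$. Taking the supremum over $\nu$ and then over $\bar d$ closes the gap. Equation (\ref{self-consistency condition of V_RA}) follows the same template with $\min_\pi$ applied to (\ref{self-consistency condition of V_RA pi}): now "$\geq$" is the easy direction (for any $\pi$, $V_{\rm ra}^\pi\geq V_{\rm ra}^*$ pointwise, then monotonicity), while "$\leq$" uses a near-optimal stationary $\pi^*$, for which $\max_d V_{\rm ra}^{\pi^*}(F(x,u,d))$ is within $\epsilon$ of $\max_d V_{\rm ra}^*(F(x,u,d))$ for every $u$, combined with the same "fix the action at $x$, keep the rest optimal" surgery, whose only subtlety (returns to $x$) is handled exactly as above.

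\textbf{Anticipated obstacle.} The hard part will be the reverse inequality in (\ref{self-consistency condition of V_RA pi}) and its analogue for (\ref{self-consistency condition of V_RA}): decoupling the adversary's (resp.\ controller's) decision at the current state $x$ from its behavior along the future trajectory. Overwriting a near-optimal policy's action at $x$ can, in principle, change that policy's continuation whenever the future trajectory revisits $x$; the resolution is the observation that in the revisiting case the resulting trajectory is periodic and its value is a minimum over a truncated horizon, which can only be larger, so no loss occurs. A secondary technical point is whether the $\min$/$\max$ over policies are attained; if not, the argument should be phrased with $\epsilon$-optimal policies and a limit, which does not affect the structure.
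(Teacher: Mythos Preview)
Your derivation of (\ref{self-consistency condition of V_RA pi,mu}) is exactly the paper's: split $\tau=0$ versus $\tau\geq1$, peel off $h(x)$ via the one-step decomposition of $V_{\rm ra}(\mathcal{T}(x,\tau))$, and finish with the distributive identity $\min\{\max\{a,b\},\max\{b,c\}\}=\max\{b,\min\{a,c\}\}$.

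Where you diverge is on (\ref{self-consistency condition of V_RA pi}) and (\ref{self-consistency condition of V_RA}). The paper dispatches both in a single sentence, ``can be obtained by further considering Bellman's principle of optimality,'' with no further argument. You instead take $\max_\mu$ (resp.\ $\min_\pi$) of the already-established identity and work out the two inequalities separately, confronting head-on the issue that the adversary's (or controller's) choice at $x$ is coupled to its continuation because the policies are \emph{stationary} maps $\mathcal{S}\to\mathcal{D}$. Your ``overwrite at $x$, keep the rest'' surgery and the periodic-trajectory analysis for the revisiting case are correct and make explicit what the paper's appeal to Bellman leaves implicit; in particular, the observation that when the tail revisits $x$ the relevant prefix values dominate $\min\{l(x),V_{\rm ra}^{\pi,\nu}(z)\}$ (either because they coincide with a $\nu$-prefix value $\geq V_{\rm ra}^{\pi,\nu}(z)$, or because the terminal state is $x$ and $l(x)$ appears) is the right closure. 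The payoff of your route is rigor: Bellman's principle in its usual form presumes that optimizing over history-dependent and over stationary policies coincide, which is true here but is precisely the content of the surgery argument you supply. The paper's route is shorter but leaves that equivalence unjustified.
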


\begin{proof}
    We first prove (\ref{self-consistency condition of V_RA pi,mu}).
    Let $x^{\prime}$ denote the successive state of $x$ when the system is driven by $\pi$ and $\mu$, i.e., $x^{\prime}=F(x,\pi(x),\mu(x))$.
    We have
    \begin{equation}
        \begin{aligned}
            &V_{\rm ra}^{\pi ,\mu}(x)=\min_{\tau\geq0} \left\{ V_{\rm ra}(\mathcal{T}(x,\tau)) \right\}\\
            &=\min \left\{ V_{\rm ra}(\mathcal{T}(x,0)),\min_{\tau\ge 1} \left\{ V_{\rm ra}(\mathcal{T}(x,\tau))\right\}\right\}\\
            &=\min \left\{ V_{\rm ra}(\mathcal{T}(x,0)), \max\left\{h(x), \min_{\tau\ge 0} \left\{ V_{\rm ra}(\mathcal{T}(x^{\prime},\tau)) \right\} \right\}\right\}\\
            &=\min \left\{ \max\left\{l(x),h(x)\right\}, \max\left\{h(x), V_{\rm ra}^{\pi ,\mu}(x^{\prime}) \right\}\right\}\\
            &=\max \left\{h(x), \min\left\{l(x),V_{\rm ra}^{\pi ,\mu}\left( x^{\prime} \right)\right\} \right\},
        \end{aligned}
    \end{equation}
    in which the second equality follows from manipulating the index of subscript $\tau$ based on the definition of $V_{\rm ra}(\mathcal{T}(x,\tau))$, and the last equality follows from the relationship $\min \{\max \{a, b\}, \max \{b, c\}\}=\max \{b, \min \{a, c\}\}$.

    (\ref{self-consistency condition of V_RA pi}) and (\ref{self-consistency condition of V_RA}) can be obtained by further considering Bellman's principle of optimality.
\end{proof}

The self-consistency conditions for reach-avoid value functions in Theorem \ref{self-consistency conditions for reach-avoid value functions} are not contraction mappings, which is different from the standard RL setting. Therefore, they can not be directly utilized for policy iteration. Following the method in \cite{hsu2021safety}, we introduce a discount factor $\gamma_{\rm ra}$ into the formulations and obtain three contractive self-consistency operators.

\begin{definition}[reach-avoid self-consistency operators]
    \label{reach-avoid self-consistency operators}
    Suppose $0<\gamma_{\rm ra}<1$. The reach-avoid self-consistency operators are defined as
    \begin{equation}
        \begin{aligned}
            &\left[ T_{\rm ra}^{\pi ,\mu}(V_{\rm ra}) \right] (x)=(1-\gamma_{\rm ra})\max\left\{l(x),h(x)\right\}\\&+\gamma_{\rm ra}\max \left\{h(x), \min\left\{l(x),V_{\rm ra}\left( x^{\prime} \right)\right\} \right\},
        \end{aligned}
    \end{equation}
    \begin{equation}
        \begin{aligned}
            &\left[ T_{\rm ra}^{\pi}(V_{\rm ra}) \right] (x)=(1-\gamma_{\rm ra})\max\left\{l(x),h(x)\right\}\\&+\gamma_{\rm ra}\max \left\{h(x), \min\left\{l(x), \max_{d\in \mathcal{D}}V_{\rm ra}\left( x^{\prime} \right)\right\} \right\},
        \end{aligned}
    \end{equation}
    \begin{equation}
        \begin{aligned}
            &\left[ T_{\rm ra}(V_{\rm ra}) \right] (x)=(1-\gamma_{\rm ra})\max\left\{l(x),h(x)\right\}\\&+\gamma_{\rm ra}\max \left\{h(x), \min\left\{l(x), \min_{u\in \mathcal{U}}\max_{d\in \mathcal{D}}V_{\rm ra}\left( x^{\prime} \right)\right\} \right\}.
        \end{aligned}
    \end{equation}
\end{definition}

\begin{theorem}[monotone contraction]
    \label{monotone contraction of self-consistency operators}
    Let $T$ denote any of ${T_{\rm ra}^{\pi ,\mu},T_{\rm ra}^{\pi},T_{\rm ra}}$.
    \begin{enumerate}
        \item Given any $V_{\rm ra},\widetilde{V}_{\rm ra}$, we have
        \begin{equation}
            \label{contraction}
            \left\|T(V_{\rm ra})-T(\widetilde{V}_{\rm ra})\right\|_{\infty} \leq \gamma_{\rm ra}\left\|V_{\rm ra}-\widetilde{V}_{\rm ra}\right\|_{\infty}.
        \end{equation}
        \item Suppose $V_{\rm ra}(x)\geq \widetilde{V}_{\rm ra}(x)$ holds for any $x\in\mathcal{S}$. Then we have
        \begin{equation}
            \label{monotone}
            [T(V_{\rm ra})](x)\geq[T(\widetilde{V}_{\rm ra})](x),\quad \forall x \in\mathcal{S}.
        \end{equation}
    \end{enumerate}
\end{theorem}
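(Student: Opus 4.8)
The plan is to reduce both claims to elementary monotonicity and non-expansiveness properties of the scalar operations $\max$, $\min$ and of pointwise supremum/infimum, and then propagate them through the common structure shared by the three operators. Write each of $T_{\rm ra}^{\pi,\mu}$, $T_{\rm ra}^{\pi}$, $T_{\rm ra}$ in the unified form
\begin{equation*}
    [T(V_{\rm ra})](x) = (1-\gamma_{\rm ra})\max\{l(x),h(x)\} + \gamma_{\rm ra}\,\max\Bigl\{h(x),\min\bigl\{l(x),\mathcal{O}[V_{\rm ra}](x)\bigr\}\Bigr\},
\end{equation*}
where the ``look-ahead'' functional $\mathcal{O}$ is $\mathcal{O}[V](x)=V(F(x,\pi(x),\mu(x)))$ for $T_{\rm ra}^{\pi,\mu}$, $\mathcal{O}[V](x)=\max_{d\in\mathcal{D}}V(F(x,\pi(x),d))$ for $T_{\rm ra}^{\pi}$, and $\mathcal{O}[V](x)=\min_{u\in\mathcal{U}}\max_{d\in\mathcal{D}}V(F(x,u,d))$ for $T_{\rm ra}$.

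For part 1, I would first record the scalar facts $|\max\{a,b\}-\max\{a,c\}|\le|b-c|$ and $|\min\{a,b\}-\min\{a,c\}|\le|b-c|$, together with the fact that pointwise sup and inf are $\|\cdot\|_\infty$-non-expansive, i.e. $|\sup_z \phi(z)-\sup_z \psi(z)|\le\sup_z|\phi(z)-\psi(z)|$ and likewise for $\inf$. Composing the last two facts in sequence shows $|\mathcal{O}[V](x)-\mathcal{O}[\widetilde V](x)|\le\|V-\widetilde V\|_\infty$ for all three choices of $\mathcal{O}$ (the first being the trivial one-point supremum). Applying the $\min$- and $\max$-Lipschitz bounds with the fixed arguments $l(x)$ and $h(x)$ then gives $\bigl|\max\{h(x),\min\{l(x),\mathcal{O}[V](x)\}\}-\max\{h(x),\min\{l(x),\mathcal{O}[\widetilde V](x)\}\}\bigr|\le\|V-\widetilde V\|_\infty$. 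The constant term $(1-\gamma_{\rm ra})\max\{l(x),h(x)\}$ does not depend on $V_{\rm ra}$ and cancels in the difference, so $|[T(V_{\rm ra})](x)-[T(\widetilde V_{\rm ra})](x)|\le\gamma_{\rm ra}\|V_{\rm ra}-\widetilde V_{\rm ra}\|_\infty$ for every $x$; taking the supremum over $x$ yields (\ref{contraction}).

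For part 2, I would use that every operation in the chain is monotone nondecreasing in its varying argument: $V\ge\widetilde V$ pointwise implies $\mathcal{O}[V]\ge\mathcal{O}[\widetilde V]$ pointwise (monotonicity of pointwise sup and inf, hence of their composition for the $T_{\rm ra}$ case), then $\min\{l(x),\mathcal{O}[V](x)\}\ge\min\{l(x),\mathcal{O}[\widetilde V](x)\}$, then $\max\{h(x),\cdot\}$ preserves the inequality, and finally multiplication by $\gamma_{\rm ra}>0$ and addition of the common constant preserve it as well, giving (\ref{monotone}).

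The main obstacle is essentially bookkeeping rather than depth: the one point needing care is to verify the non-expansiveness and monotonicity of the nested functional $\mathcal{O}[V](x)=\min_{u}\max_{d}V(F(x,u,d))$ uniformly in $x$, i.e. that the outer $\min_u$ of a family of $\|\cdot\|_\infty$-non-expansive (resp.\ monotone) maps $V\mapsto\max_d V(F(x,u,\cdot))$ is again non-expansive (resp.\ monotone). This is exactly the two generic sup/inf facts applied twice, so stating the argument once for the generic operator $\mathcal{O}$ and instantiating it for all three operators completes the proof.
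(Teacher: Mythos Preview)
Your proposal is correct and follows essentially the same approach as the paper: both cancel the $(1-\gamma_{\rm ra})$ term, use the $1$-Lipschitz property of $\max$ and $\min$ in one argument together with the non-expansiveness of pointwise $\sup$/$\inf$ (the paper packages this as the single inequality $|\min_x\max_y f-\min_x\max_y g|\le\max_x\max_y|f-g|$), and then note that monotonicity is immediate since every operation in the chain is monotone. Your unified $\mathcal{O}$ notation is a slightly cleaner bookkeeping device than the paper's, which proves the case $T_{\rm ra}$ and declares the other two ``similar,'' but the substance is the same.
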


\begin{proof}
    We only prove the monotone contraction for $T_{\rm ra}$, while the proof for $T_{\rm ra}^{\pi}$ and $T_{\rm ra}^{\pi,\mu}$ is similar.
    We have
    \begin{equation}
        \begin{aligned}
            &[T_{\rm ra}( V_{\rm ra} )](x) - [T_{\rm ra}( \widetilde{V}_{\rm ra} )](x)
            \\=&\gamma_{\rm ra} \max \left\{h(x), \min\left\{l(x), \min_{u\in \mathcal{U}}\max_{d\in \mathcal{D}}V_{\rm ra}\left( x^{\prime} \right)\right\} \right\}\\&-\gamma_{\rm ra} \max \left\{h(x), \min\left\{l(x), \min_{u\in \mathcal{U}}\max_{d\in \mathcal{D}}\widetilde{V}_{\rm ra}\left( x^{\prime} \right)\right\} \right\}.
        \end{aligned}
    \end{equation}
    We have
    \begin{equation}
        \label{proving contraction}
        \begin{aligned}
            &\left\|T_{\rm ra}( V_{\rm ra} ) -T_{\rm ra}( \widetilde{V}_{\rm ra} )\right\|_{\infty}\\
            \leq& \gamma_{\rm ra} \max\limits_{u\in \mathcal{U}} \max\limits_{d\in \mathcal{D}} \left| V_{\rm ra}(f(x,u,a))-\widetilde{V}_{\rm ra}(f(x,u,a)) \right|\\
            \leq& \gamma_{\rm ra} \left\| V_{\rm ra}-\widetilde{V}_{\rm ra} \right\| _{\infty}.
        \end{aligned}
    \end{equation}
    The first inequality in (\ref{proving contraction}) follows from the relationship
    \begin{equation}
        \begin{aligned}
            &\left|\min _x \max _y f(x, y)-\min _x \max _y g(x, y)\right| \\&\leq \max _x \max _y\left|f(x, y)-g(x, y)\right|.
        \end{aligned}
    \end{equation}

    Since $\max$ and $\min$ operations are monotone, the monotonicity of $T_{\rm ra}$ is obvious.
\end{proof}

In the following proposition, we show that if $\gamma_{\rm ra}$ is sufficiently close to 1, the fixed points of the self-consistency operators approach the original reach-avoid value functions.

\begin{proposition}
    \label{approximation}
    Let $T$ denote any of ${T_{\rm ra}^{\pi ,\mu},T_{\rm ra}^{\pi},T_{\rm ra}}$, and $V_{\rm ra}^d$ denote the fixed point of operator $T$, i.e., $T(V_{\rm ra}^d)=V_{\rm ra}^d$. Let $V_{\rm ra}$ denote the corresponding original reach-avoid value function in Definition \ref{reach-avoid value functions}. Then we have $\lim _{\gamma_{\rm ra} \rightarrow 1}V_{\rm ra}^d(x)=V_{\rm ra}(x)$.
\end{proposition}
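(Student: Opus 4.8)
The plan is to sandwich the fixed point $V_{\rm ra}^{d}$ between $V_{\rm ra}$ from below and the horizon‑truncations of $V_{\rm ra}$ from above, and then let the truncation horizon and $\gamma_{\rm ra}$ tend to their limits in that order. I will carry this out for $T=T_{\rm ra}^{\pi,\mu}$ (so $V_{\rm ra}=V_{\rm ra}^{\pi,\mu}$); the cases $T_{\rm ra}^{\pi}$ and $T_{\rm ra}$ are identical once $\max_{d}$ (resp.\ $\min_{u}\max_{d}$) is inserted at each step and the matching self‑consistency identity from Theorem~\ref{self-consistency conditions for reach-avoid value functions} is used.

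Write $m(x):=\max\{l(x),h(x)\}$ and let $x'$ be the successor of $x$ under $\pi,\mu$. Since $\min\{l(x),W(x')\}\le l(x)$, the quantity $\phi_{x}(W(x')):=\max\{h(x),\min\{l(x),W(x')\}\}$ always lies in $[h(x),m(x)]$, and $T_{\rm ra}^{\pi,\mu}(W)(x)=(1-\gamma_{\rm ra})m(x)+\gamma_{\rm ra}\phi_{x}(W(x'))$ is a convex combination of $m(x)$ and $\phi_{x}(W(x'))\le m(x)$. Hence (i) $h\le V_{\rm ra}^{d}\le m$, and (ii) $T_{\rm ra}^{\pi,\mu}(W)(x)\ge\phi_{x}(W(x'))$ for every $W$. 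For the lower bound, apply (ii) with $W=V_{\rm ra}$ and use the self-consistency identity (\ref{self-consistency condition of V_RA pi,mu}), which says $\phi_{x}(V_{\rm ra}(x'))=V_{\rm ra}(x)$: this gives $T_{\rm ra}^{\pi,\mu}(V_{\rm ra})\ge V_{\rm ra}$, so by monotonicity and the contraction property of Theorem~\ref{monotone contraction of self-consistency operators} the iterates $(T_{\rm ra}^{\pi,\mu})^{k}(V_{\rm ra})$ increase and converge in $\|\cdot\|_{\infty}$ to $V_{\rm ra}^{d}$, whence $V_{\rm ra}^{d}\ge V_{\rm ra}$.

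For the upper bound, put $U_{N}(x):=\min_{0\le\tau\le N}V_{\rm ra}(\mathcal{T}(x,\tau))$, so that $U_{N}\downarrow V_{\rm ra}$ pointwise as $N\to\infty$. Separating the $\tau=0$ term and using $\min\{\max\{a,b\},\max\{b,c\}\}=\max\{b,\min\{a,c\}\}$ one gets $U_{0}=m$ and $U_{N}(x)=\phi_{x}(U_{N-1}(x'))$. Let $\Delta:=\sup_{x}m(x)-\inf_{x}h(x)$, which is finite on the bounded region of interest. I would then show by induction on $N$ that $V_{\rm ra}^{d}(x)\le U_{N}(x)+(1-\gamma_{\rm ra}^{N+1})\Delta$ for all $x$: the case $N=0$ is (i); for the step, substitute $V_{\rm ra}^{d}(x')\le U_{N-1}(x')+(1-\gamma_{\rm ra}^{N})\Delta$ into the fixed-point equation, use that $\phi_{x}$ is nondecreasing and $1$-Lipschitz (so $\phi_{x}(a+\delta)\le\phi_{x}(a)+\delta$ for $\delta\ge0$) together with $\phi_{x}(U_{N-1}(x'))=U_{N}(x)$, and bound the leftover $(1-\gamma_{\rm ra})(m(x)-U_{N}(x))+\gamma_{\rm ra}(1-\gamma_{\rm ra}^{N})\Delta$ by $(1-\gamma_{\rm ra}^{N+1})\Delta$ using $m(x)-U_{N}(x)\le\Delta$. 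Now fix $x$: for each $N$, letting $\gamma_{\rm ra}\to1$ gives $\limsup_{\gamma_{\rm ra}\to1}V_{\rm ra}^{d}(x)\le U_{N}(x)$, and then $N\to\infty$ gives $\limsup_{\gamma_{\rm ra}\to1}V_{\rm ra}^{d}(x)\le V_{\rm ra}(x)$; combined with $V_{\rm ra}^{d}\ge V_{\rm ra}$ this proves $\lim_{\gamma_{\rm ra}\to1}V_{\rm ra}^{d}(x)=V_{\rm ra}(x)$.

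The main obstacle is the upper bound for $T_{\rm ra}^{\pi}$ and $T_{\rm ra}$. The induction carries over unchanged, since $\max_{d}$ and $\min_{u}\max_{d}$ commute with the nondecreasing $1$-Lipschitz map $\phi_{x}$ and with adding a nonnegative constant; what needs a genuinely separate argument is the identity $\lim_{N}U_{N}=V_{\rm ra}$, where $U_{N}$ is now the value of an $N$-step zero-sum reach-avoid game. This is the standard statement that finite-horizon game values decrease to the infinite-horizon value, and it is exactly here that one needs compactness of $\mathcal{D}$ (and of the visited portion of $\mathcal{S}$) and continuity of $F$; one would establish it from the finite-horizon dynamic-programming recursion for $U_{N}$, which has the same form as above.
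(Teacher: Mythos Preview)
Your proposal is correct and takes a genuinely different, more rigorous route than the paper. The paper's argument consists of writing down the infinite nested expression $D(x)$ in \eqref{explicit form of safety value}, observing that it is formally a fixed point of $T_{\rm ra}^{\pi,\mu}$, and then asserting that as $\gamma_{\rm ra}\to 1$ it collapses to $V_{\rm ra}^{\pi,\mu}$; no justification of the limit or of the well-definedness of the infinite nesting is given, and the game cases $T_{\rm ra}^{\pi}$, $T_{\rm ra}$ are dismissed as ``similar.'' Your sandwich argument makes the limit precise: the lower bound $V_{\rm ra}^{d}\ge V_{\rm ra}$ follows cleanly from the self-consistency identity and monotone iteration, and the upper bound is quantitative, giving $V_{\rm ra}^{d}\le U_{N}+(1-\gamma_{\rm ra}^{N+1})\Delta$ via the $1$-Lipschitz clamp map $\phi_{x}$ and the finite-horizon recursion $U_{N}=\phi_{x}\circ U_{N-1}$. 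You also correctly isolate the only real subtlety in extending to $T_{\rm ra}^{\pi}$ and $T_{\rm ra}$, namely that $U_{N}\downarrow V_{\rm ra}$ is now a finite-horizon game-value limit requiring compactness of $\mathcal{D}$ and continuity of $F$; the paper does not flag this. The paper's approach has the virtue of brevity and of displaying the fixed point explicitly; yours actually proves the proposition and yields a convergence rate in $(N,\gamma_{\rm ra})$. One small point to tighten: the bound $m(x)-U_{N}(x)\le\Delta$ relies on $U_{N}(x)\ge\inf h$, which in turn needs the trajectory $\{x_{t}\}$ to stay in a set on which $h$ is bounded below; you should state explicitly where this boundedness assumption enters.
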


\begin{proof}
    The key is to define a discounted version of reach-avoid values. The original definition of $V_{\rm ra}^{\pi,\mu}(x)$ is given by $V_{\rm ra}^{\pi,\mu}(x) = \min_{\tau\in\mathbb{N}}\left\{V_{\rm ra}(\mathcal{T}(x,\tau))\right\}$. Let $\left\{x_t\right\}$ denote the trajectory of the system when it is driven by $\pi$ and $\mu$, starting from $x_0=x$. We define the discounted reach-avoid value as
    \begin{equation}
        \label{explicit form of safety value}
        \begin{aligned}
            D(x)&=(1-\gamma_{\rm ra}) \max \left\{l(x_0), h(x_0)\right\}+\gamma_{\rm ra} \max \left\{h(x_0), \right.\\&\min \left\{l(x_0),\right.
            \left.(1-\gamma_{\rm ra}) \max \left\{l(x_1), h(x_1)\right\}\right.\\&\left.+\gamma_{\rm ra} \max \left\{h(x_1), \min \left\{l(x_1), \cdots\right\}\right\}\right\},
        \end{aligned}
    \end{equation}
    It can be verified that $D$ is the explicit form of the fixed point of $T_{\rm ra}^{\pi ,\mu}$, i.e., $D=T_{\rm ra}^{\pi ,\mu}(D)$. Forcing $\gamma_{\rm ra}$ to be sufficiently close to 1, we have $\lim _{\gamma_{\rm ra} \rightarrow 1}D(x)=V_{\rm ra}^{\pi,\mu}(x)$. The proof is similar for $T_{\rm ra}^{\pi}$ and $T_{\rm ra}$.
\end{proof}

Since the reach-avoid self-consistency operators are monotone contractions, we can utilize policy iteration to solve for the reach-avoid sets. The policy iteration algorithm is shown in Algorithm \ref{Policy iteration algorithm}, in which we iterate between policy evaluation and policy improvement. The former evaluates the reach-avoid value function of the current policy $\pi$, by finding the worst-case disturbances for $\pi$. The latter finds a better policy on the performance of reach-avoid, by solving a maximin problem. The convergence of the proposed policy iteration is shown in Theorem \ref{monotone convergence of policy iteration}.

\begin{algorithm}[ht]
    \label{Policy iteration algorithm}
    \caption{Policy iteration for robust reach-avoid problems}
    \KwIn{initial policy $\pi_0$.}
    \For{each iteration $k$}{
        \textit{(policy evaluation)}
        
        Solve for $V_{\rm ra}^{\pi_k}$ such that $V_{\rm ra}^{\pi_k}=T_{\rm ra}^{\pi_k}(V_{\rm ra}^{\pi_k})$.
        
        \textit{(policy improvement)}

        $\pi_{k+1}=\underset{\pi}{\mathrm{arg}\min}\left\{ T_{\rm ra}^{\pi}\left( V_{\rm ra}^{\pi_k} \right) \right\}$.
    }
\end{algorithm}

\begin{theorem}[monotone convergence of policy iteration]
    \label{monotone convergence of policy iteration}
    The sequence $\left\{V_{\rm ra}^{\pi_k}\right\}$ generated by Algorithm \ref{Policy iteration algorithm} converges monotonically to the fixed point $V_{\rm ra}^*$ of $T_{\rm ra}$, i.e., $T_{\rm ra}(V_{\rm ra}^*)=V_{\rm ra}^*$.
\end{theorem}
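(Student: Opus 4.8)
The plan is to run the classical policy‑iteration argument, carried out entirely at the level of the discounted operators, so that throughout $V_{\rm ra}^{\pi_k}$ denotes the unique fixed point of $T_{\rm ra}^{\pi_k}$ and $V_{\rm ra}^*$ the unique fixed point of $T_{\rm ra}$ — both guaranteed to exist by the contraction part of Theorem \ref{monotone contraction of self-consistency operators}. The argument rests on three ingredients: (i) a \emph{greedy identity} linking $T_{\rm ra}$ with the one‑step minimization over policies; (ii) \emph{monotone policy improvement} $V_{\rm ra}^{\pi_{k+1}}\le V_{\rm ra}^{\pi_k}$ pointwise; and (iii) a \emph{uniform lower bound} $V_{\rm ra}^{\pi_k}\ge V_{\rm ra}^*$ for every $k$, from which the contraction estimate of Theorem \ref{monotone contraction of self-consistency operators} forces geometric — hence monotone — convergence.

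First I would record two elementary facts. The greedy identity: for any bounded $V$ and any $x$, since a policy may choose $\pi(x)$ freely and $t\mapsto\max\{h(x),\min\{l(x),t\}\}$ is nondecreasing, one has $\min_{\pi}[T_{\rm ra}^{\pi}(V)](x)=[T_{\rm ra}(V)](x)$, and in particular $T_{\rm ra}(V)\le T_{\rm ra}^{\pi}(V)$ for every $\pi$; assuming the extremizations over the compact sets $\mathcal{U}$ and $\mathcal{D}$ are attained, the minimizer is exactly the improved policy, i.e. $T_{\rm ra}^{\pi_{k+1}}(V_{\rm ra}^{\pi_k})=T_{\rm ra}(V_{\rm ra}^{\pi_k})$. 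The second fact is an auxiliary lemma: because each of $T_{\rm ra}^{\pi}$ and $T_{\rm ra}$ is a monotone $\gamma_{\rm ra}$‑contraction, iterating it from a \emph{super‑solution} $V$ (one with $T(V)\le V$) yields a pointwise non‑increasing sequence $\{T^n(V)\}$ converging to the fixed point (hence the fixed point is $\le V$), and iterating from a \emph{sub‑solution} ($T(V)\ge V$) yields a non‑decreasing sequence converging to the fixed point (hence $\ge V$).

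Then the two key inequalities follow mechanically. For \emph{improvement}: the evaluation equation $V_{\rm ra}^{\pi_k}=T_{\rm ra}^{\pi_k}(V_{\rm ra}^{\pi_k})$ together with the greedy identity gives $T_{\rm ra}^{\pi_{k+1}}(V_{\rm ra}^{\pi_k})=T_{\rm ra}(V_{\rm ra}^{\pi_k})\le T_{\rm ra}^{\pi_k}(V_{\rm ra}^{\pi_k})=V_{\rm ra}^{\pi_k}$, so $V_{\rm ra}^{\pi_k}$ is a super‑solution for $T_{\rm ra}^{\pi_{k+1}}$; the auxiliary lemma then yields $V_{\rm ra}^{\pi_{k+1}}\le T_{\rm ra}^{\pi_{k+1}}(V_{\rm ra}^{\pi_k})=T_{\rm ra}(V_{\rm ra}^{\pi_k})\le V_{\rm ra}^{\pi_k}$, establishing that $\{V_{\rm ra}^{\pi_k}\}$ is pointwise non‑increasing. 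For the \emph{lower bound}: $V_{\rm ra}^{\pi}=T_{\rm ra}^{\pi}(V_{\rm ra}^{\pi})\ge T_{\rm ra}(V_{\rm ra}^{\pi})$, so $V_{\rm ra}^{\pi}$ is a super‑solution for $T_{\rm ra}$ for \emph{every} $\pi$, and the auxiliary lemma gives $V_{\rm ra}^{\pi}\ge V_{\rm ra}^*$; in particular $V_{\rm ra}^{\pi_k}\ge V_{\rm ra}^*$ for all $k$. Combining, $V_{\rm ra}^*=T_{\rm ra}(V_{\rm ra}^*)\le V_{\rm ra}^{\pi_{k+1}}\le T_{\rm ra}(V_{\rm ra}^{\pi_k})$, which sandwiches $0\le V_{\rm ra}^{\pi_{k+1}}-V_{\rm ra}^*\le T_{\rm ra}(V_{\rm ra}^{\pi_k})-T_{\rm ra}(V_{\rm ra}^*)$ pointwise; taking $\|\cdot\|_{\infty}$ and invoking the contraction bound of Theorem \ref{monotone contraction of self-consistency operators} yields $\|V_{\rm ra}^{\pi_{k+1}}-V_{\rm ra}^*\|_{\infty}\le\gamma_{\rm ra}\|V_{\rm ra}^{\pi_k}-V_{\rm ra}^*\|_{\infty}$, so $\|V_{\rm ra}^{\pi_k}-V_{\rm ra}^*\|_{\infty}\to 0$. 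Together with the pointwise monotonicity from the improvement step, this is precisely monotone convergence to $V_{\rm ra}^*$.

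The step I expect to require the most care is the greedy identity together with the attainment of the improved policy $\pi_{k+1}$ — a compactness / measurable‑selection issue (one needs the $\min_{u\in\mathcal{U}}\max_{d\in\mathcal{D}}$ to be realized by an admissible policy so that the policy‑improvement line of Algorithm \ref{Policy iteration algorithm} is well posed); once this is granted, everything else is a formal consequence of the monotone‑contraction structure already proved in Theorem \ref{monotone contraction of self-consistency operators}. A secondary point worth stating explicitly is the auxiliary monotone‑iteration lemma, which silently underlies both the improvement and the lower‑bound steps; it should also be noted that here $V_{\rm ra}^*$ is understood as the fixed point of $T_{\rm ra}$ (the discounted object), which relates to the undiscounted $V_{\rm ra}^*$ of Definition \ref{reach-avoid value functions} only in the limit $\gamma_{\rm ra}\to 1$ via Proposition \ref{approximation}.
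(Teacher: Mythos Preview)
Your proposal is correct and follows essentially the same route as the paper: both arguments hinge on the chain $V_{\rm ra}^{\pi_k}\ge T_{\rm ra}(V_{\rm ra}^{\pi_k})=T_{\rm ra}^{\pi_{k+1}}(V_{\rm ra}^{\pi_k})\ge V_{\rm ra}^{\pi_{k+1}}\ge V_{\rm ra}^*$, obtained from the greedy identity together with the monotone--contraction structure of Theorem~\ref{monotone contraction of self-consistency operators} (your ``super-solution'' lemma is exactly the paper's use of $(T_{\rm ra}^{\pi_{k+1}})^\infty$ and $(T_{\rm ra})^\infty$). The only substantive difference is the endgame: the paper invokes bounded monotone convergence and then identifies the limit as a fixed point, whereas you sandwich $V_{\rm ra}^{\pi_{k+1}}-V_{\rm ra}^*$ by $T_{\rm ra}(V_{\rm ra}^{\pi_k})-T_{\rm ra}(V_{\rm ra}^*)$ and extract a geometric rate $\|V_{\rm ra}^{\pi_{k+1}}-V_{\rm ra}^*\|_\infty\le\gamma_{\rm ra}\|V_{\rm ra}^{\pi_k}-V_{\rm ra}^*\|_\infty$, which is a mild strengthening and also sidesteps the slightly informal limit-identification step in the paper.
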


\begin{proof}
    The key is to exploit the monotonicity and contraction of $T_{\rm ra}^{\pi}$ and $T_{\rm ra}$. First we establish the following recursive relationship:
    \begin{equation}
        \label{recursion for monotone convergence}
        V_{\rm ra}^{\pi _k}\ge T_{\rm ra}( V_{\rm ra}^{\pi _k} ) \ge V_{\rm ra}^{\pi _{k+1}}\ge V_{\rm ra}^{*}.
    \end{equation}
    With the definition of $T_{\rm ra}$ and policy improvement, we have
    \begin{equation}
        T_{\rm ra} ( V_{\rm ra}^{\pi_k})=T_{\rm ra}^{\pi_{k+1}}(V_{\rm ra}^{\pi_k}) \leq T_{\rm ra}^{\pi_k}(V_{\rm ra}^{\pi_k})=V_{\rm ra}^{\pi_k}.
    \end{equation}
    Utilizing the monotonicity and contraction of $T_{\rm ra}^{\pi_{k+1}}$, together with $V_{\rm ra}^{\pi_k} \geq T_{\rm ra}^{\pi_{k+1}}(V_{\rm ra}^{\pi_k})$, we have
    \begin{equation}
        \begin{aligned}
            V_{\rm ra}^{\pi_k} &\geq T_{\rm ra}(V_{\rm ra}^{\pi_k})=T_{\rm ra}^{\pi_{k+1}}(V_{\rm ra}^{\pi_k})\\& \geq(T_{\rm ra}^{\pi_{k+1}})^{\infty}(V_{\rm ra}^{\pi_k})=V_{\rm ra}^{\pi_{k+1}}.
        \end{aligned}
    \end{equation}
    Utilizing the monotonicity and contraction of $T_{\rm ra}$, and $T_{\rm ra}(V_{\rm ra}^{\pi_{k+1}}) \leq T_{\rm ra}^{\pi_{k+1}}(V_{\rm ra}^{\pi_{k+1}})=V_{\rm ra}^{\pi_{k+1}}$, we have
    \begin{equation}
        V_{\rm ra}^*=(T_{\rm ra})^{\infty}(V_{\rm ra}^{\pi_{k+1}}) \leq \cdots \leq V_{\rm ra}^{\pi_{k+1}}.
    \end{equation}
    So the recursive relationship (\ref{recursion for monotone convergence}) holds. The sequence $\left\{V_{\rm ra}^{\pi_k}\right\}$ generated by policy iteration is monotonically decreasing and bounded by $V_{\rm ra}^{\pi_0}\geq V_{\rm ra}^{\pi_k}\geq V_{\rm ra}^*$, so it converges. The convergence implies that $V_{\rm ra}^{\pi_k}=T_{\rm ra}(V_{\rm ra}^{\pi_k})=V_{\rm ra}^{\pi_{k+1}}$, which means that the convergence point is the fixed point of $T_{\rm ra}$.
\end{proof}

The policy iteration approach enjoys rigorous convergence guarantee and the policy is guaranteed to fulfill the goal of reach-avoid for states inside its reach-avoid set. However, it requires state-wise computation and is computationally prohibitive for continuous state spaces. Still, it paves the way for the design of an actor-critic deep RL algorithm, which is of high scalability and applicable to complex systems with continuous state and action spaces.

We utilize the action reach-avoid values for the actor-critic algorithm. The relationship between the action reach-avoid values and the reach-avoid values is similar to that of common action values and common values in the standard RL setting. The algorithm includes three neural networks: the reach-avoid policy network $\pi_{\rm ra}(x;\phi)$, the disturbance network $\mu(x;\beta)$ and the reach-avoid value network $Q_{\rm ra}(x,u,d;\psi)$. $\phi$, $\beta$ and $\psi$ denote the network parameters.
For a set $\mathcal{D}$ of collected samples, the loss function for $Q_{\rm ra}(x,u,a;\psi)$ is
\begin{equation}
    L_{Q_{\rm ra}}(\psi)=\mathbb{E}_{(x,u,d,h,l,x^{\prime}) \sim \mathcal{D}}\left\{\left(Q_{\rm ra}(x,u,d ; \psi)-\hat{Q}_{\rm ra}\right)^2\right\},
\end{equation}
where
\begin{equation}
    \begin{aligned}
        &\hat{Q}_{\rm ra} = (1-\gamma_{\rm ra})\max\left\{l(x),h(x)\right\}\\&+\gamma_{\rm ra}\max \left\{h(x), \min\left\{l(x),Q_{\rm ra}\left(x^{\prime}, u^{\prime}, d^{\prime};\bar{\psi}\right)\right\} \right\},
    \end{aligned}
\end{equation}
in which $u^{\prime}=\pi_{\rm ra}(x^{\prime};\phi )$ and $d^{\prime}=\mu(x^{\prime};\beta)$. $\bar{\psi}$ denotes target network parameters for $Q_{\rm ra}$, which follows the standard design in RL algorithms. This loss function is based on the reach-avoid self-consistency operators.
$\pi_{\rm ra}(x;\phi)$ aims to minimize the reach-avoid value. Its loss function is
\begin{equation}
    L_{\pi_{\rm ra}}(\phi)=\mathbb{E}_{x \sim \mathcal{D}}\left\{Q_{\rm ra}\left(x,\pi_{\rm ra}(x;\phi ),\mu(x;\beta) ; \psi\right)\right\}.
\end{equation}
$\mu(x;\beta)$ aims to maximize the reach-avoid value. Its loss function is
\begin{equation}
    L_{\mu}(\beta)=-\mathbb{E}_{x \sim \mathcal{D}}\left\{Q_{\rm ra}\left(x,\pi_{\rm ra}(x;\phi ),\mu(x;\beta) ; \psi\right)\right\}.
\end{equation}
The overall algorithm is summarized in Algorithm \ref{deep RL Policy iteration algorithm}.

\begin{algorithm}
    \label{deep RL Policy iteration algorithm}
    \caption{Actor-critic algorithm for robust reach-avoid problems}
    \KwIn{network parameters $\psi$, $\phi$ and $\beta$, target network parameter $\bar{\psi}\leftarrow\psi$, learning rate $\eta$, target smoothing coefficient $\tau$, replay buffer $\mathcal{D}\leftarrow \varnothing$.}
    \For{each iteration}{
        \For{each system step}{
            Sample control input $u_t\sim \pi_{\rm ra}(x_t;\phi)$;
            
            Sample disturbance $d_t\sim \mu(x_t;\beta)$;

            Observe next state $x_{t+1}$, safety constraint value $h_t$, target constraint value $l_t$;

            Store transition $\mathcal{D} \leftarrow \mathcal{D} \cup\left\{\left(x_t, u_t, d_t, h_t, l_t, x_{t+1}\right)\right\}$.
        }

        \For{each gradient step}{
            Sample a batch of data from $\mathcal{D}$;

            Update reach-avoid value function $\psi \leftarrow \psi-\eta \nabla_\psi L_{Q_{\rm ra}}(\psi)$;

            Update protagonist policy $\phi \leftarrow \phi-\eta \nabla_\phi L_{\pi_{\rm ra}}(\phi)$;

            Update adversary policy $\beta \leftarrow \beta-\eta \nabla_\beta L_{\mu}(\beta)$;

            Update target network $\bar{\psi} \leftarrow \tau \psi+(1-\tau) \psi$.
        }
    }
\end{algorithm}

\section{Online Safety Verificaiton}

As discussed in previous sections, the RL method for robust reach-avoid problems enjoys considerable scalability and is applicable to complex systems. However, the safety assurance is lost. In this section, we discuss how to reestablish the guarantee of robust constraint satisfaction.

Given an initial state, we can generate a nominal trajectory using the learned reach-avoid policy $\pi_{\rm ra}$.
It is not enough to just check that the reach-avoid value of this nominal trajectory is negative.
We need to further verify that the reach-avoid Values of all possible trajectories that deviate from the nominal one due to disturbances are negative. There are two options to accomplish this goal. The first is to directly check that the constraints are satisfied for all possible realizations of the disturbance sequence, which is similar to the idea of robust open-loop MPC. The second is to permit an additional feedback controller that actively stabilizes the system back to the nominal trajectory and then check the robust constraint satisfaction, which is similar to the idea of tube-based MPC \cite{mayne2014model}. We adopt the second approach, since in the first approach the set of all possible trajectories can grow very large, resulting in misleadingly small safe sets for safety filters.

In tube-based MPC approaches, the stabilizing feedback controllers are typically designed offline and the nominal trajectories are optimized online \cite{mayne2014model}. Recent advances utilize system level synthesis to jointly optimize the nominal trajectory and the feedback controller on-the-fly, with the benefit of significantly reducing conservatism \cite{chen2022robust, leeman2023robust-A, leeman2023robust-B}. However, nonlinear RMPC methods using system level synthesis involve solving non-convex optimization problems, which is computationally prohibitive \cite{leeman2023robust-A, leeman2023robust-B}. Inspired by these works, in this paper, we utilize system level synthesis to only optimize the feedback controller for the smallest worst-case reach-avoid value, and establish a SOCP formulation that can be solved efficiently.

\subsection{Establishing Error Dynamics}

Let $T$ denote the prediction horizon. For a given state $\overline{x}$ and an output $\overline{u}$ from the nominal controller, the nominal trajectory $\mathcal{T}(z_0, T)=\left\{z_0,v_0,z_1,\cdots,z_T,v_T\right\}$ is generated as
\begin{equation}
    \label{generate nominal trajectory}
    \begin{aligned}
        &z_{k+1} = f(z_k, v_k),\quad v_{k+1} = \pi_{\rm ra}(z_{k+1}),\quad k=0,\cdots,T-1,
        \\&\text{s.t.}\quad z_0 = \overline{x},\quad v_0 = \overline{u}.
    \end{aligned}
\end{equation}
in which $\pi_{\rm ra}$ denotes the reach-avoid policy learned with Algorithm \ref{deep RL Policy iteration algorithm}.

To design a feedback controller that stabilizes the system back to the nominal trajectory, firstly we need to establish the error dynamics. For a nonlinear system, it is common to linearize the system around the nominal trajectory and bound the linearization error \cite{koller2018learning, bastani2021safe, leeman2023robust-A}. We adopt the method proposed in \cite{leeman2023robust-A}.

For $0\leq k\leq T$, we have
\begin{equation}
    \begin{aligned}
        f\left(x_k, u_k\right)= & f\left(z_k, v_k\right)+A_k^f\left(x_k-z_k\right)+B_k^f\left(u_k-v_k\right) \\
        & +r_f\left(x_k-z_k, u_k-v_k\right),
    \end{aligned}
\end{equation}
in which $A_k^f\in \mathbb{R}^{n_x \times n_x}$ and $B_k^f\in \mathbb{R}^{n_x \times n_u}$ is given by
\begin{equation}
    A_k^f=\left.\frac{\partial f}{\partial x}\right|_{x=z_k, u=v_k},\quad B_k^f=\left.\frac{\partial f}{\partial u}\right|_{x=z_k, u=v_k}.
\end{equation}
and $r_f\left(x_k-z_k, u_k-v_k\right)$ is the Lagrange remainder. We also have
\begin{equation}
    \begin{aligned}
        g\left(x_k, u_k\right) d_k= & g\left(z_k, v_k\right) d_k+\left(I_{n_x} \otimes d_k^{\top}\right) A_k^{g}\left(x_k-z_k\right)\\&+\left(I_{n_x} \otimes d_k^{\top}\right) B_k^{g}\left(u_k-v_k\right) \\
        & +r_g\left(x_k-z_k, u_k-v_k, d_k\right),
    \end{aligned}
\end{equation}
in which $A_k^g\in \mathbb{R}^{n_x n_d \times n_x}$ and $B_k^g\in \mathbb{R}^{n_x n_d \times n_u}$ is defined as
\begin{equation}
    A_k^g=\left(A_k^{g_{1,:}}, A_k^{g_{2,:}}, \cdots, A_k^{g_{n_x,:}}\right)^{\top},
\end{equation}
\begin{equation}
    B_k^g=\left(B_k^{g_{1,:}}, B_k^{g_{2,:}}, \cdots, B_k^{g_{n_x,:}}\right)^{\top},
\end{equation}
with $A_k^{g_{i,:}}\in \mathbb{R}^{n_x \times n_d}$ and $B_k^{g_{i,:}}\in \mathbb{R}^{n_u \times n_d}$ given by
\begin{equation}
    A_k^{g_{i,:}}=\left.\frac{\partial g_{i,:}}{\partial x}\right|_{x=z_k, u=v_k},\quad B_k^{g_{i,:}}=\left.\frac{\partial g_{i,:}}{\partial u}\right|_{x=z_k, u=v_k}.
\end{equation}
The error term $r_g\left(x_k-z_k, u_k-v_k, d_k\right)$ is determined by $d_k$ and the Lagrange remainder of $g\left(x_k, u_k\right)$.

The overall linearization error $r=r_f\left(x_k-z_k, u_k-v_k\right)+r_g\left(x_k-z_k, u_k-v_k, d_k\right)$ can be overbound by
\begin{equation}
    \label{linearization error bound}
    \begin{aligned}
        &\left|r_i\right| \leq\|e_k\|_{\infty}^2 \mu_i, \quad i=1,2,\cdots, n_x,
        \\& \forall x_k,z_k\in \mathcal{X},\quad \forall u_k,v_k\in \mathcal{U},\quad \forall d_k \in \mathcal{D},
    \end{aligned}
\end{equation}
in which $e_k=\left[(x_k-z_k)^{\top},(u_k-v_k)^{\top}\right]^{\top} \in \mathbb{R}^{n_x+n_u}$ and $\mu_i$ denote the worst-case curvature. See \cite{leeman2023robust-A} for more details.

Define $\Delta x_k = x_k -z_k$ and $\Delta u_k=u_k-v_k$. The error dynamics is given by
\begin{equation}
    \label{error dynamics}
    \begin{aligned}
        &\Delta x_{k+1}=A_k^f \Delta x_k+B_k^f \Delta u_k+w_k,\quad k=0,\cdots,T-1,\\&
        \begin{aligned}
            \text{s.t.}\quad \Delta x_0 =0,\quad \Delta u_0=0,
        \end{aligned}
    \end{aligned}
\end{equation}
in which $w_k$ is given by
\begin{equation}
    \begin{aligned}
        w_k=&r_k+g\left(z_k, v_k\right) d_k+\left(I_{n_x} \otimes d_k^{\top}\right) A_k^g \Delta x_k\\&+\left(I_{n_x} \otimes d_k^{\top}\right) B_k^g \Delta u_k.
    \end{aligned}
\end{equation}

\subsection{SOCP Approach for Worst-case Reach-Avoid Values}

Following the notations in (\ref{LTV system compact form}), we denote the stacked states, nominal states, inputs, nominal inputs and disturbances as $\mathbf{x}=[x_1^{\top},\cdots,x_T^{\top}]^{\top}$, $\mathbf{z}=[z_1^{\top},\cdots,z_T^{\top}]^{\top}$, $\mathbf{u}=[u_1^{\top},\cdots,u_T^{\top}]^{\top}$, $\mathbf{v}=[v_1^{\top},\cdots,v_T^{\top}]^{\top}$, $\mathbf{d}=[d_0^{\top},\cdots,d_{T-1}^{\top}]^{\top}$, respectively. We have $\mathbf{\Delta x} = \mathbf{x}-\mathbf{z}$ and $\mathbf{\Delta u}=\mathbf{u}-\mathbf{v}$. Also let $\mathcal{V} _{\mathcal{D}}$ denote the set of vertices for the polytope $\mathcal{D}$.

Given the error dynamics (\ref{error dynamics}), we can design a feedback controller to stabilize the system back to the nominal trajectory. The controller follows the causal linear state-feedback design, i.e., $\mathbf{\Delta u}=\mathbf{K}\mathbf{\Delta x}$, in which $\mathbf{K} \in \mathcal{L}^{T, n_{u} \times n_{x}}$. The goal is to minimize the worst-case reach-avoid value $V_{\rm ra}^*$ for any closed-loop trajectories of the system (\ref{nonlinear system with disturbance}). If $V_{\rm ra}^*\leq0$, the system is guaranteed to be safely guided back to the target set. Leveraging the tools of system level synthesis, we establish a SOCP formulation to accomplish this goal, which is summarized in (\ref{SOCP formulation}). $n_{H_x}$ denotes the number of rows in $H_x$ and the same goes for $n_{H_u}$ and $n_{R_x}$. The only nonlinear part of (\ref{SOCP formulation}) is (\ref{second-order cone}), which belongs to second-order cone constraints. This is not possible for RMPC approaches even in the linear system case \cite{chen2022robust}, since the nominal trajectory and system responses are both decision variables and are coupled in (\ref{affine space constraints}).

\begin{figure*}
    \begin{subequations}
        \label{SOCP formulation}
        \begin{align}
            \min_{\mathbf{\Phi} _x,\mathbf{\Phi} _u,\mathbf{\Sigma},\lambda ,\eta ,V_{\mathrm{ra}}}&		V_{\mathrm{ra}}\\
            \mathrm{s}.\mathrm{t}.\qquad\,&		\left[ \begin{matrix}
            I-\mathcal{Z} \mathbf{A}&		-\mathcal{Z} \mathbf{B}\\
        \end{matrix} \right] \begin{bmatrix} \mathbf{\Phi}_x \\ \mathbf{\Phi}_u \end{bmatrix} =\mathbf{\Sigma }, \label{affine space constraints}
        \\
            &		\left| I_{n_x}^{i,:}g\left( z_0,v_0 \right) d \right|\le \sigma _{0,i},\quad\forall i=1,\cdots ,n_x,\quad\forall d\in \mathcal{V} _{\mathcal{D}}, \label{trivial disturbance filter overbound}
            \\
            &		\begin{aligned}
            \lambda _k\mu _i&+\left| I_{n_x}^{i,:}g\left( z_k,v_k \right) d \right|+\left\| I_{n_x}^{i,:}\left( I_{n_x}\otimes d^{\top} \right) A_{k}^{g}\mathbf{\Phi} _{x}^{k,1:k} \right\| _1+\left\| I_{n_x}^{i,:}\left( I_{n_x}\otimes d^{\top} \right) B_{k}^{g}\mathbf{\Phi} _{u}^{k,1:k} \right\| _1\le \sigma _{k,i},\\&\forall k=1,\cdots ,T-1,\quad\forall i=1,\cdots ,n_x,\quad\forall d\in \mathcal{V} _{\mathcal{D}}, \label{disturbance filter overbound}
            \\
        \end{aligned}\\
            &		\left\| \begin{pmatrix} \mathbf{\Phi}_x^{k,1:k} \\ \mathbf{\Phi}_u^{k,1:k} \end{pmatrix} \right\| _{\infty}\le \eta _k,\quad\forall k=1,\cdots ,T-1,\label{eta containing}
            \\&
            H_{x}^{i,:}z_0-h_{x}^{i}\le V_{\mathrm{ra}},\quad\forall i=1,\cdots ,n_{H_x},\\
            &		H_{x}^{i,:}z_k+\left\| H_{x}^{i,:}\mathbf{\Phi} _{x}^{k,1:k} \right\| _1-h_{x}^{i}\le V_{\mathrm{ra}},\quad\forall k=1,\cdots ,T,\quad\forall i=1,\cdots ,n_{H_x},\\
            &
            H_{u}^{i,:}v_0-h_{u}^{i}\le V_{\mathrm{ra}},\quad\forall i=1,\cdots ,n_{H_u},\\
            &		H_{u}^{i,:}v_k+\left\| H_{u}^{i,:}\mathbf{\Phi} _{u}^{k,1:k} \right\| _1-h_{u}^{i}\le 0,\quad\forall k=1,\cdots ,T,\quad\forall i=1,\cdots ,n_{H_u},\\
            &		R_{x}^{i,:}z_T+\left\| R_{x}^{i,:}\mathbf{\Phi} _{x}^{T,:} \right\| _1-r_{x}^{i}\le V_{\mathrm{ra}},\quad\forall i=1,\cdots ,n_{R_x},\\
            & \left\|\left(\frac{\lambda_k-1}{2},\eta_k\right)^{\top}\right\|_2\leq \frac{\lambda_k+1}{2},\quad\forall k=1,\cdots ,T-1. \label{second-order cone}
        \end{align}
    \end{subequations}
	{\noindent} \rule[-1pt]{17.85cm}{0.04em}
\end{figure*}

The key of (\ref{SOCP formulation}) is to overbound $w_k$ in (\ref{error dynamics}) with $\Sigma_k \widetilde{w}_k$, in which the disturbance filter $\Sigma_k = {\rm diag}(\sigma_{k,1},\cdots,\sigma_{k,n_x})$ and $\widetilde{w}_k\in\mathcal{B}_{\infty}^{n_x}$. Then utilize Lemma \ref{system level synthesis} to impose state and input constraints on the system responses $\mathbf{\Phi}_x$ and $\mathbf{\Phi}_u$. We formally prove the correctness of this SOCP formulation in the following theorem.

\begin{theorem}[worst-case reach-avoid value]
    \label{worst-case reach-avoid value}
    Let $\mathbf{\Phi}_x^*$, $\mathbf{\Phi}_u^*$ and $V^*_{\rm ra}$ denote the optimal solution of (\ref{SOCP formulation}). The overall feedback controller is constructed as $\mathbf{u} = \mathbf{v} + \mathbf{K}(\mathbf{x}-\mathbf{z})$, in which $\mathbf{K}=\mathbf{\Phi}_u^* {\mathbf{\Phi}_x^*}^{-1}$. Any close-loop trajecotry $\mathcal{T}(x_0,T)$ of the system (\ref{nonlinear system with disturbance}) under this feedback controller with $\mathbf{d}\in \mathcal{D}^T$ satisfies that $V_{\rm ra}(\mathcal{T}(x_0,T))\leq V^*_{\rm ra}$.
\end{theorem}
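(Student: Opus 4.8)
The plan is to show that the SOCP constraints in (\ref{SOCP formulation}) precisely encode, via system level synthesis, an over-approximation of every possible error trajectory, and that this over-approximation certifies the bound $V_{\rm ra}(\mathcal{T}(x_0,T))\le V_{\rm ra}^*$. First I would fix an arbitrary disturbance realization $\mathbf{d}\in\mathcal{D}^T$ and the corresponding closed-loop trajectory under $\mathbf{u}=\mathbf{v}+\mathbf{K}(\mathbf{x}-\mathbf{z})$. The key structural claim is that the true disturbance $w_k$ in the error dynamics (\ref{error dynamics}) satisfies $|w_k^i|\le\sigma_{k,i}$, i.e. $w_k$ lies in the box $\Sigma_k\mathcal{B}_\infty^{n_x}$ whose diagonal filter $\Sigma_k$ appears as a decision variable. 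To establish this I would decompose $w_k$ into its four pieces: the Lagrange remainder $r_k$, the nominal-dynamics term $g(z_k,v_k)d_k$, and the two cross terms $(I_{n_x}\otimes d_k^\top)A_k^g\Delta x_k$ and $(I_{n_x}\otimes d_k^\top)B_k^g\Delta u_k$. Using (\ref{system response for LTV}), write $\Delta x_k=\mathbf{\Phi}_x^{k,1:k}\widetilde{\mathbf{w}}_{0:k-1}$ and $\Delta u_k=\mathbf{\Phi}_u^{k,1:k}\widetilde{\mathbf{w}}_{0:k-1}$ with $\widetilde w_j\in\mathcal{B}_\infty^{n_x}$; then each cross term is bounded in absolute value by the corresponding $\ell_1$-norm of a row of $(I_{n_x}\otimes d^\top)A_k^g\mathbf{\Phi}_x^{k,1:k}$ (resp. $B_k^g\mathbf{\Phi}_u^{k,1:k}$), and since $\mathcal{D}$ is a polytope the worst case over $d\in\mathcal{D}$ is attained at a vertex $d\in\mathcal{V}_{\mathcal{D}}$, which is exactly what (\ref{disturbance filter overbound}) (and (\ref{trivial disturbance filter overbound}) for $k=0$) requires. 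The remainder term is handled by (\ref{linearization error bound}): $|r_k^i|\le\|e_k\|_\infty^2\mu_i$, and $\|e_k\|_\infty=\|(\Delta x_k^\top,\Delta u_k^\top)^\top\|_\infty\le\eta_k$ by (\ref{eta containing}); so the inductive hypothesis needs $\eta_k^2\mu_i$ to be absorbed, and this is where $\lambda_k$ enters — the second-order-cone constraint (\ref{second-order cone}) is the rotated-cone reformulation of $\eta_k^2\le\lambda_k$, so $|r_k^i|\le\lambda_k\mu_i$, matching the $\lambda_k\mu_i$ term in (\ref{disturbance filter overbound}).

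Second, with the containment $w_k\in\Sigma_k\mathcal{B}_\infty^{n_x}$ in hand, Lemma \ref{system level synthesis} applies to the LTV error system (\ref{error dynamics}): since $\{\mathbf{\Phi}_x,\mathbf{\Phi}_u\}$ satisfy the affine subspace constraint (\ref{affine space constraints}) $=$ (\ref{affine subspace for LTV system}), $\mathbf{\Phi}_x$ is invertible, and the realized $\mathbf{K}=\mathbf{\Phi}_u^*{\mathbf{\Phi}_x^*}^{-1}$ produces $\mathbf{\Delta x}=\mathbf{\Phi}_x^*\widetilde{\mathbf{w}}$, $\mathbf{\Delta u}=\mathbf{\Phi}_u^*\widetilde{\mathbf{w}}$ for some $\widetilde{\mathbf{w}}$ with each block in $\mathcal{B}_\infty^{n_x}$. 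There is a subtlety I would be careful about: the $\widetilde w_j$ used to over-bound $w_k$ at step $k$ and the $\widetilde w_j$ realized by the system responses must be consistent; this is fine because the causal lower-triangular structure means $\Delta x_k$ and $\Delta u_k$ depend only on $\widetilde w_0,\dots,\widetilde w_{k-1}$, so the argument closes by induction on $k$ with $\widetilde w_k:=\Sigma_k^{-1}w_k$ (well-defined once $\sigma_{k,i}>0$; the degenerate case $\sigma_{k,i}=0$ forces $w_k^i=0$ and is handled trivially).

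Third, I would translate the over-approximation into the value bound. For any row $i$ of $H_x$ and any $k$, $H_x^{i,:}x_k-h_x^i = H_x^{i,:}z_k + H_x^{i,:}\Delta x_k - h_x^i \le H_x^{i,:}z_k + \|H_x^{i,:}\mathbf{\Phi}_x^{k,1:k}\|_1 - h_x^i \le V_{\rm ra}^*$ by the state-constraint rows of (\ref{SOCP formulation}); taking the max over $i$ gives $h(x_k)\le V_{\rm ra}^*$ for all $0\le k\le T$ (the $k=0$ row covers $\Delta x_0=0$). Similarly the terminal row gives $R_x^{i,:}x_T - r_x^i\le V_{\rm ra}^*$, hence $l(x_T)\le V_{\rm ra}^*$. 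By Definition \ref{reach-avoid value functions}, $V_{\rm ra}(\mathcal{T}(x_0,T))=\max\{\max_{0\le t\le T}h(x_t),\,l(x_T)\}\le V_{\rm ra}^*$. (The input-constraint rows are not needed for the value bound per se but guarantee $\mathbf{u}\in\mathcal{U}^T$ so that the linearization bound (\ref{linearization error bound}) is legitimately applicable along the trajectory — I would note this explicitly, tying it to the hypothesis $x_k,u_k,d_k$ ranging over $\mathcal{X},\mathcal{U},\mathcal{D}$.)

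\textbf{Main obstacle.} The delicate point is the self-referential nature of the disturbance over-bound: $\sigma_{k,i}$ bounds $w_k$, but $w_k$ depends on $\Delta x_k,\Delta u_k$, which depend on $\mathbf{\Phi}_x,\mathbf{\Phi}_u$ and on $\sigma_{0,i},\dots,\sigma_{k-1,i}$ through the system response map. Making this rigorous requires the induction on $k$ described above, carefully exploiting the block-lower-triangular (causal) structure so that the bound at step $k$ only invokes quantities already controlled at steps $<k$, together with the fact that (\ref{second-order cone}) correctly linearizes the quadratic remainder bound $\eta_k^2\mu_i$. Verifying that the rotated second-order cone $\|((\lambda_k-1)/2,\eta_k)^\top\|_2\le(\lambda_k+1)/2$ is equivalent to $\eta_k^2\le\lambda_k$ (with $\lambda_k\ge0$) is a short computation but is the linchpin that makes the whole over-approximation convex and valid.
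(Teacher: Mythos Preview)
Your proposal is correct and follows essentially the same approach as the paper: decompose $w_k$ into its four pieces, bound each via constraints (\ref{trivial disturbance filter overbound})--(\ref{second-order cone}) together with the system-response relations from Lemma~\ref{system level synthesis}, then read off the state and terminal bounds from the remaining rows of (\ref{SOCP formulation}). You are in fact more explicit than the paper about the inductive (causal) structure needed to close the self-referential over-bound on $\sigma_{k,i}$, and about the role of the input-constraint rows in validating the hypothesis of (\ref{linearization error bound}); the paper's proof treats both points implicitly.
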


\begin{proof}
    We first prove that $w_k$ in (\ref{error dynamics}) is overbounded by $\Sigma_k \widetilde{w}_k$, in which the disturbance filter $\Sigma_k = {\rm diag}(\sigma_{k,1},\cdots,\sigma_{k,n_x})$ and $\widetilde{w}_k\in\mathcal{B}_{\infty}^{n_x}$. With this overbound, we can conclude all possible behaviors of the error dynamics (\ref{error dynamics}) are included in those of the new error dynamics as $\Delta x_{k+1}=A_k^f \Delta x_k+B_k^f \Delta u_k+\Sigma_k \widetilde{w}_k$. Then we can utilize Lemma \ref{system level synthesis} to obtain that $\mathbf{\Delta x} = \mathbf{\Phi}_x \mathbf{\widetilde{w}}$ and $\mathbf{\Delta u} = \mathbf{\Phi}_u \mathbf{\widetilde{w}}$ for $\mathbf{\widetilde{w}}\in\mathcal{B}_{\infty}^{Tn_x}$. It is suffice is to show that $|I_{n_x}^{i,:}w_k|\leq\sigma_{k,i}$.
    For $k=0$, it is obvious in (\ref{trivial disturbance filter overbound}).
    For $1\leq k\leq T-1$, we have
    \begin{equation}
        \begin{aligned}
            |I_{n_x}^{i,:}w_k|\leq& |I_{n_x}^{i,:}r_k|+|I_{n_x}^{i,:}g\left(z_k, v_k\right) d_k|\\&+| I_{n_x}^{i,:}\left( I_{n_x}\otimes d_k^{\top} \right) A_{k}^{g}\Delta x_k |\\&+| I_{n_x}^{i,:}\left( I_{n_x}\otimes d_k^{\top} \right) B_{k}^{g}\Delta u_k |.
        \end{aligned}
    \end{equation}
    Utilizing (\ref{linearization error bound}), we have $|I_{n_x}^{i,:}r_k|\leq \left\|(\Delta x_k,\Delta u_k)^{\top}\right\|_{\infty}\mu_i$.
    With Lemma \ref{system level synthesis}, we have $\Delta x_k = \mathbf{\Phi}_x^{k,1:k}\mathbf{\widetilde{w}}^{1:k n_x}$ and $\Delta u_k = \mathbf{\Phi}_u^{k,1:k}\mathbf{\widetilde{w}}^{1:k n_x}$. Since $\left\|\mathbf{\widetilde{w}}^{1:k n_x}\right\|_\infty=1$, together with (\ref{eta containing}) and (\ref{second-order cone}), we have $|I_{n_x}^{i,:}r_k|\leq \lambda_k\mu_i$. With the definition of matrix one-norm, we have $| I_{n_x}^{i,:}\left( I_{n_x}\otimes d_k^{\top} \right) A_{k}^{g}\Delta x_k |\leq \left\| I_{n_x}^{i,:}\left( I_{n_x}\otimes d^{\top} \right) A_{k}^{g}\mathbf{\Phi} _{x}^{k,1:k} \right\| _1$ and $| I_{n_x}^{i,:}\left( I_{n_x}\otimes d_k^{\top} \right) B_{k}^{g}\Delta x_k |\leq \left\| I_{n_x}^{i,:}\left( I_{n_x}\otimes d^{\top} \right) B_{k}^{g}\mathbf{\Phi} _{u}^{k,1:k} \right\| _1$. Putting the above analysis together, we have
    \begin{equation}
        \label{last step for overbound}
        \begin{aligned}
            |I_{n_x}^{i,:}w_k|\leq& |I_{n_x}^{i,:}r_k|+|I_{n_x}^{i,:}g\left(z_k, v_k\right) d_k|\\&+\left\| I_{n_x}^{i,:}\left( I_{n_x}\otimes d_k^{\top} \right) A_{k}^{g}\mathbf{\Phi} _{x}^{k,1:k} \right\| _1\\&+\left\| I_{n_x}^{i,:}\left( I_{n_x}\otimes d_k^{\top} \right) B_{k}^{g}\mathbf{\Phi} _{u}^{k,1:k} \right\| _1,
        \end{aligned}
    \end{equation}
    which aligns with (\ref{disturbance filter overbound}).
    Since the right-hand side of (\ref{last step for overbound}) is linear with $d$ and $\mathcal{D}$ is convex, we only need (\ref{disturbance filter overbound}) to hold for all $d\in \mathcal{V} _{\mathcal{D}}$.

    Next we show that the worst-case reach-avoid value is bounded by $V_{\rm ra}$. For state constraints, we have
    \begin{equation}
        \begin{aligned}
            H_x^{i,:}x_k-h_x^i&= H_x^{i,:}(z_k+\Delta x_k)-h_x^i\\&
            = H_x^{i,:}z_k+H_x^{i,:}\Delta x_k-h_x^i\\&
            \leq H_x^{i,:}z_k+\left\| H_x^{i,:}\mathbf{\Phi} _{x}^{k,1:k} \right\| _1-h_x^i,
        \end{aligned}
    \end{equation}
    in which the last inequality follows from the definition of vector one-norm and $\Delta x_k = \mathbf{\Phi}_x^{k,1:k}\mathbf{\widetilde{w}}^{1:k n_x}$. Similarly, the bounds for input constraints and target constraints can also be proved. Also, with Lemma (\ref{system level synthesis}), the overall feedback controller is given by $\mathbf{u} = \mathbf{v} + \mathbf{K}^*(\mathbf{x}-\mathbf{z})$.
\end{proof}

The execution procedure of the proposed safety filter is summarized in Algorithm \ref{Safety filter design}. We show in the following theorem that this algorithm achieves persistent safety guarantee.

\begin{theorem}[persistent safety guarantee]
    \label{persistent safety guarantee}
    Assume that the first-time computation of (\ref{SOCP formulation}) yields a $V_{\rm ra}^*\leq0$. Then the infinite-horizon closed-loop trajectory of the system (\ref{nonlinear system with disturbance}) under the safety filter described with Algorithm \ref{Safety filter design} is guaranteed to satisfy the state constraints, i.e., $x_k\in\mathcal{X}$, $\forall k\geq0$ and $\forall d_k\in\mathcal{D}$.
\end{theorem}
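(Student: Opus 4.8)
The plan is to prove by strong induction on the physical timestep that the closed‑loop state always lies in $\mathcal{X}$, organizing the argument around the notion of an \emph{epoch}: a maximal run of physical time between two consecutive instants at which the SOCP (\ref{SOCP formulation}) returns $V_{\rm ra}^*\le 0$. By the standing assumption the first such instant is $t=0$, so the time axis splits into epochs, each starting at some instant $t_0$ where Algorithm \ref{Safety filter design} installs a fresh nominal trajectory $\{z_k,v_k\}$ and a fresh feedback controller $\mathbf{K}=\mathbf{\Phi}_u^*{\mathbf{\Phi}_x^*}^{-1}$ (lines \ref{update K} onward), then applies, at relative step $k$, the tube input $u=v_k+\sum_{j=1}^{k}\mathbf{K}^{k,j}\Delta x_j$ while $k$ is within the prediction horizon and the terminal input $\pi_{\rm terminal}(x)$ afterwards, all the while re‑probing the SOCP; the epoch ends as soon as a probe succeeds. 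Under the inductive hypothesis that $x_s\in\mathcal{X}$ for all $s<t$, I would determine which controller produced $u_{t-1}$ and treat the two cases below. Throughout, I would use the identifications $h(x)\le 0\iff x\in\mathcal{X}$ and $l(x)\le 0\iff x\in\mathcal{R}$ that follow from $h(x)=\max_i\{H_x^{i,:}x-h_x^i\}$ and $l(x)=\max_i\{R_x^{i,:}x-r_x^i\}$.

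First, the tube‑controller case. If at time $t-1$ the active epoch began at $t_0\le t-1$ and no intermediate probe has succeeded, then the inputs $u_{t_0},\dots,u_{t-1}$ actually executed coincide with the closed‑loop map $\mathbf{u}=\mathbf{v}+\mathbf{K}(\mathbf{x}-\mathbf{z})$ analyzed in Theorem \ref{worst-case reach-avoid value}, evaluated on the realized disturbances $d_{t_0},\dots\in\mathcal{D}$. Hence Theorem \ref{worst-case reach-avoid value} applies to the trajectory $\mathcal{T}(x_{t_0},T)$ and gives $V_{\rm ra}(\mathcal{T}(x_{t_0},T))\le V_{\rm ra}^*\le 0$, i.e. $x_{t_0},x_{t_0+1},\dots,x_{t_0+T}\in\mathcal{X}$ and $x_{t_0+T}\in\mathcal{R}$. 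In particular $x_t\in\mathcal{X}$ whenever $t-t_0$ is within the horizon, which is exactly the range over which the tube controller is used; and this conclusion is unaffected if the epoch is later truncated by a successful probe, since the states realized up to that point depend only on the tube inputs already applied.

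Second, the terminal‑controller case. If the epoch runs past the horizon uninterrupted, the last state reached under the tube controller lies in $\mathcal{R}$ by the previous paragraph, and from that state on Algorithm \ref{Safety filter design} executes $u=\pi_{\rm terminal}(x)$; property (\ref{terminal safe controller requirement}) then guarantees $x_s\in\mathcal{X}$ for every subsequent $s$ and every disturbance sequence in $\mathcal{D}$, for as long as $\pi_{\rm terminal}$ remains active — i.e. until the next successful SOCP probe, which opens a new epoch. A successful probe at time $t_1$ means (\ref{SOCP formulation}) is feasible with $z_0=x_{t_1}$, and the state‑constraint rows at $k=0$ force $x_{t_1}\in\mathcal{X}$; thus the invariant passes into the new epoch, the two cases close the inductive step, and the base case $t=0$ is immediate from $V_{\rm ra}^*\le 0$ (which already yields $x_0\in\mathcal{X}$). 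Chaining the epochs therefore yields $x_k\in\mathcal{X}$ for all $k\ge 0$ and all $d_k\in\mathcal{D}$.

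The main obstacle is not an analytic estimate — Theorem \ref{worst-case reach-avoid value} carries the technical weight — but the bookkeeping needed to discharge its hypotheses under the algorithm's on‑the‑fly restarts: one must verify (i) that the tube inputs actually executed before any restart coincide with the SLS closed‑loop controller, so the reach‑avoid bound transfers from the linearized‑tube analysis to the true nonlinear trajectory for the realized (not merely worst‑case) disturbances; (ii) that the horizon index is aligned so the handover to $\pi_{\rm terminal}$ occurs precisely at a state already certified to be in $\mathcal{R}$; and (iii) that a restart never escapes the certified region, because SOCP feasibility itself re‑certifies the current state. Formalizing (i)–(iii) and phrasing the induction over epochs rather than bare timesteps is where the real care is required.
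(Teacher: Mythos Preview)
Your proposal is correct and follows essentially the same approach as the paper: both arguments invoke Theorem \ref{worst-case reach-avoid value} to certify $x_0,\dots,x_T\in\mathcal{X}$ and $x_T\in\mathcal{R}$ during the tube-controller phase, then appeal to (\ref{terminal safe controller requirement}) for the terminal phase, and handle a successful SOCP probe by resetting and repeating. Your epoch-indexed strong induction is a more explicit packaging of the paper's brief two-case sketch, and your flagged bookkeeping points (i)--(iii) are exactly the items the paper leaves implicit.
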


\begin{proof}
    Given a $V_{\rm ra}^*\leq0$ and the corresponding $\mathbf{\Phi}_x^*$ and $\mathbf{\Phi}_u^*$, a closed-loop feedback controller is constructed as $\mathbf{u} = \mathbf{v} + \mathbf{K}(\mathbf{x}-\mathbf{z})$, in which $\mathbf{K}=\mathbf{\Phi}_u^* {\mathbf{\Phi}_x^*}^{-1}$. Based on Theorem \ref{worst-case reach-avoid value}, the closed-loop trajectory under this feedback controller satisfies that $x_k\in\mathcal{X}$, $0\leq k\leq T$ and $x_T\in \mathcal{R}$. As shown in Algorithm \ref{Safety filter design}, at every following state, the SOCP problem (\ref{SOCP formulation}) is solved. There may be two cases. If a $V_{\rm ra}^*>0$ is obtained, we choose the initially constructed feedback controller $\mathbf{u} = \mathbf{v} + \mathbf{K}(\mathbf{x}-\mathbf{z})$ for $1\leq k\leq T$ and the terminal controller $\pi_{\rm terminal}$ for $k\geq T+1$. Therefore, the state trajectory satisfies $x_k\in\mathcal{X}$. If a $V_{\rm ra}^*\leq0$ is obtained, we reset $k=0$ and reconstruct the closed-loop feedback controller, which gets the situation back to the beginning. In either case, the safety of the system is guaranteed.
\end{proof}

\section{Numerical Example}\label{Numerical Example}

In this section, we demonstrate the effectiveness of our method through a numerical example. Consider the dynamics of a pendulum:
\begin{equation}
    \begin{bmatrix}
        \dot{x}_2 \\
        \dot{x}_2
        \end{bmatrix}=\begin{bmatrix}
        x_1 \\
        \frac{3g}{2l}\sin \left(x_1\right)
        \end{bmatrix}+\begin{bmatrix}
            0 \\
            \frac{3}{ml^2} u
            \end{bmatrix}+\begin{bmatrix}
                d_1 \\
                d_2 + d_3 u
                \end{bmatrix},
\end{equation}
in which $x_1$ and $x_2$ denote the angle and angular velocity of the pendulum, respectively. The control input $u$ is the torque applied to the pendulum. The parameters are $g=10.0$, $l=1.0$, $m=1.0$. The input constraint is $\mathcal{U}=\left\{u\mid -5\leq u\leq 5\right\}$. The disturbances are $d_1\in[-0.01,0.01]$, $d_2\in[-0.01,0.01]$ and $d_3\in[-0.001,0.001]$. The state constraint is $\mathcal{X}=\left\{(x_1,x_2)\mid x_1\in [-\pi/3,\pi/3],x_2\in[-2.0,2.0]\right\}$. The system is discretized with the fourth-order Runge-Kutta method and the sampling time is $0.05$s. The target set is defined as $\mathcal{R}=\left\{(x_1,x_2)\mid x_1\in [-\pi/12,\pi/12],x_2\in[-0.5,0.5]\right\}$, i.e., a small box around the origin and can be rendered safe with a linear quadratic regulator (LQR) as the terminal safe policy.

We train a neural network $\pi_{\rm ra}$ with Algorithm \ref{deep RL Policy iteration algorithm}. As discussed in previous sections, the most important evaluation metric for safety filters is the size of safe sets determined by them. We calculate the maximal RIS by solving the HJI PDE with the level set toolbox \cite{mitchell2002application}. To evaluate the size of the safe set rendered by the proposed safety filter, we employ a $40\times60$ grid in $\mathcal{X}$ and utilize the grid points as initial states $\overline{x}$. The nominal input is chosen as $\overline{u}=\pi_{\rm ra}(\overline{x})$. We solve the SOCP problem (\ref{SOCP formulation}) at these initial states and collect the reach-avoid value $V_{\rm ra}^*$. We also record the time consumed by solving the SOCP problem. The safe set of the safety filter contains those initial states with $V_{\rm ra}^*\leq 0$. The SOCP is formulated with CasADi \cite{andersson2019casadi} and solved with Gurobi. The prediction horizon $T$ is 25.

We compare our proposed safety filter with a relevant method presented by Bastani et al. \cite{bastani2021safe}. They propose to train a neural network recovery policy $\pi_{\rm rec}$ with standard RL methods and use it for model predictive shielding. The recovery policy is trained to reach the target set as fast as possible, i.e., the agent gets punished when not reaching the target set. We train a neural network $\pi_{\rm rec}$ with soft actor-critic \cite{haarnoja2018soft}, a state-of-the-art RL algorithm, following the description in \cite{bastani2021safe}.
Note that $\pi_{\rm ra}$ and $\pi_{\rm rec}$ are trained with the same amount of data and the shared parameters (e.g., network architecture, actor learning rate) are also the same.
For safety verification, we do not utilize the sampling-based method proposed in \cite{bastani2021safe} since it only provides probabilistic guarantee. We use our SOCP formulation (\ref{SOCP formulation}) to evaluate the safe set of $\pi_{\rm rec}$.

We also compare our method with the RMPC approach proposed in \cite{leeman2023robust-A, leeman2023robust-B}, which utilizes system level synthesis to jointly optimize the nominal trajectory and the stabilizing linear feedback controller. The objective is chosen as $J=v_0^2$, i.e., minimize the first control input. Since the optimization problem formulated by RMPC is non-convex, we resort to the Ipopt solver. We employ the $40\times60$ grid in $\mathcal{X}$ and utilize the grid points as initial states for RMPC and record the time consumed by solving the optimization problem (either returns a solution or reports infeasibility). Since the safety filtering problem is more like checking the feasibility of RMPC rather than solving it, we employ an optimization formulation that encodes the RMPC feasibility checking for additional comparison. Suppose an optimization problem is specified as
\begin{equation}
    \label{initial optimization problem}
    \begin{aligned}
        & \min _x f(x) \\
        & \text { s.t. }\left\{\begin{array}{l}
        c^{\mathcal{E}}(x)=0 \\
        c^{\mathcal{I}}(x) \leq 0
        \end{array}\right..
        \end{aligned}
\end{equation}
The feasibility checking for (\ref{initial optimization problem}) is formulated as
\begin{equation}
    \label{feasibility checking}
    \begin{aligned}
        & \min _{x,t} \quad t^{\top}t \\
        & \text { s.t. }\left\{\begin{array}{l}
        -t_i^{\mathcal{E}}\leq c_i^{\mathcal{E}}(x)\leq t_i^{\mathcal{E}} \\
        c_j^{\mathcal{I}}(x)\leq t_j^{\mathcal{I}}\\
        t_i^{\mathcal{E}}\geq 0\\
        t_j^{\mathcal{I}}\geq 0
        \end{array}\right.,
    \end{aligned}
\end{equation}
in which $t=(t_{\mathcal{E}}^{\top},t_{\mathcal{I}}^{\top})^{\top}$ denotes the constraint violation. The optimization problem (\ref{initial optimization problem}) is treated as feasible if the optimal solution of (\ref{feasibility checking}) is below $10^{-4}$.

The evaluation results of safe sets are shown in Fig. \ref{safe set results}.
The maximal RIS is calculated by solving the HJI PDE. We plot the zero-value contour of the solved safety value, which is denoted by the green line.
For $\pi_{\rm ra}$ and $\pi_{\rm rec}$, we draw the zero-value contour of their solved $V_{\rm ra}*$, which is denoted by blue line and red line, respectively. The results demonstrate the effectiveness of the reach-avoid method, since the safe set rendered by $\pi_{\rm ra}$ is much larger than that of $\pi_{\rm rec}$. $\pi_{\rm rec}$ is trained with the goal of reaching the target set as fast as possible, which is highly suboptimal for safety filtering. Moreover, the safe set of $\pi_{\rm ra}$ is very close to the maximal RIS, with marginal differences at the top right and the bottom left.

\begin{figure}[htbp]
    \centering
    \includegraphics[width=3.5in]{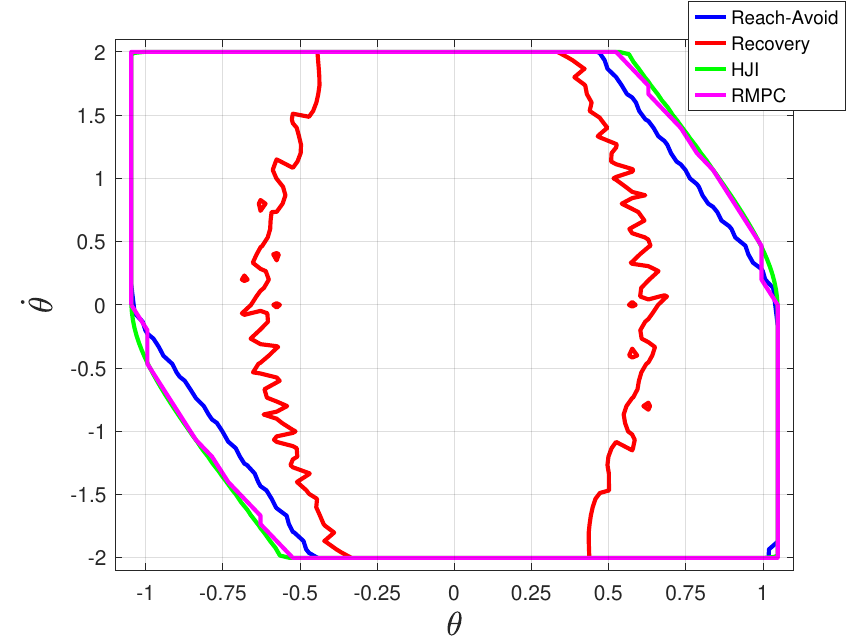}
    \caption{The safe sets for different methods.}
    \label{safe set results}
\end{figure}

The boundary of the feasible states for RMPC \cite{leeman2023robust-B} is denoted by the magenta line. It almost coincides with the green line (i.e., the boundary of the maximal RIS). This demonstrates the effectiveness of jointly optimizing nominal trajectories and stabilizing controllers on-the-fly, which significantly reduces conservativeness. However, the RMPC approach is much more computationally expensive compared to our proposed safety filter, as shown in Table \ref{Computation Time Comparison} (SD denotes standard deviation). Note that the computation time for the three methods in Table \ref{Computation Time Comparison} are evaluated on the same computer. The mean computation time of RMPC or its feasibility checking formulation (\ref{feasibility checking}) is about 40-50 times larger than the safety filter. It is very challenging for RMPC to realize infeasibility when the system is outside the maximal RIS, so the maximal computation time for RMPC is very large. The feasibility checking formulation performs better, but is still much slower than the safety filter.

\begin{table}[htbp]
    \caption{Computation Time Comparison}
    \label{Computation Time Comparison}
    \centering
    \begin{tabular}{ l c c c }
    Method & Mean & SD & Max\\ \hline
    Proposed safety filter & 0.1446 s & 0.0225 s &  0.2558 s\\
    Nonlinear RMPC from \cite{leeman2023robust-B} & 5.1962 s & 7.9486 s & 42.1694 s\\ 
    Feasibility checking for RMPC & 4.3684 s & 0.5716 s & 6.7651 s\\\hline 
    \end{tabular}
\end{table}

\section{Conclusion}

In this work, we present a theoretical framework that bridges the advantages of both RMPC and RL to synthesize safety filters for nonlinear systems with state- and action-dependent uncertainty. We decompose the (maximal) RIS into two parts: a target set that aligns with terminal region design of RMPC, and a reach-avoid set that accounts for the rest of RIS. A policy iteration approach is proposed for robust reach-avoid problems and its monotone convergence is proved. This method paves the way for an actor-critic deep RL algorithm, which simultaneously synthesizes a reach-avoid policy network, a disturbance policy network, and a reach-avoid value network. The learned reach-avoid policy network is utilized to generate nominal trajectories for online verification. We formulate a SOCP approach for online verification using system level synthesis, which optimizes for the worst-case reach-avoid value of any possible trajectories. The proposed safety filter requires much lower computational complexity than RMPC and still enjoys persistent robust safety guarantee. The effectiveness of our method is illustrated through a numerical example.

\section*{Acknowledgement}

The authors would like to thank Haitong Ma and Yunan Wang for valuable suggestions on problem formulation. The authors also would like to thank Antoine P. Leeman for helpful discussions on system level synthesis.

\bibliographystyle{IEEEtran}
\bibliography{reference}

\begin{IEEEbiography}[{\includegraphics[width=1in,height=1.25in,clip,keepaspectratio]{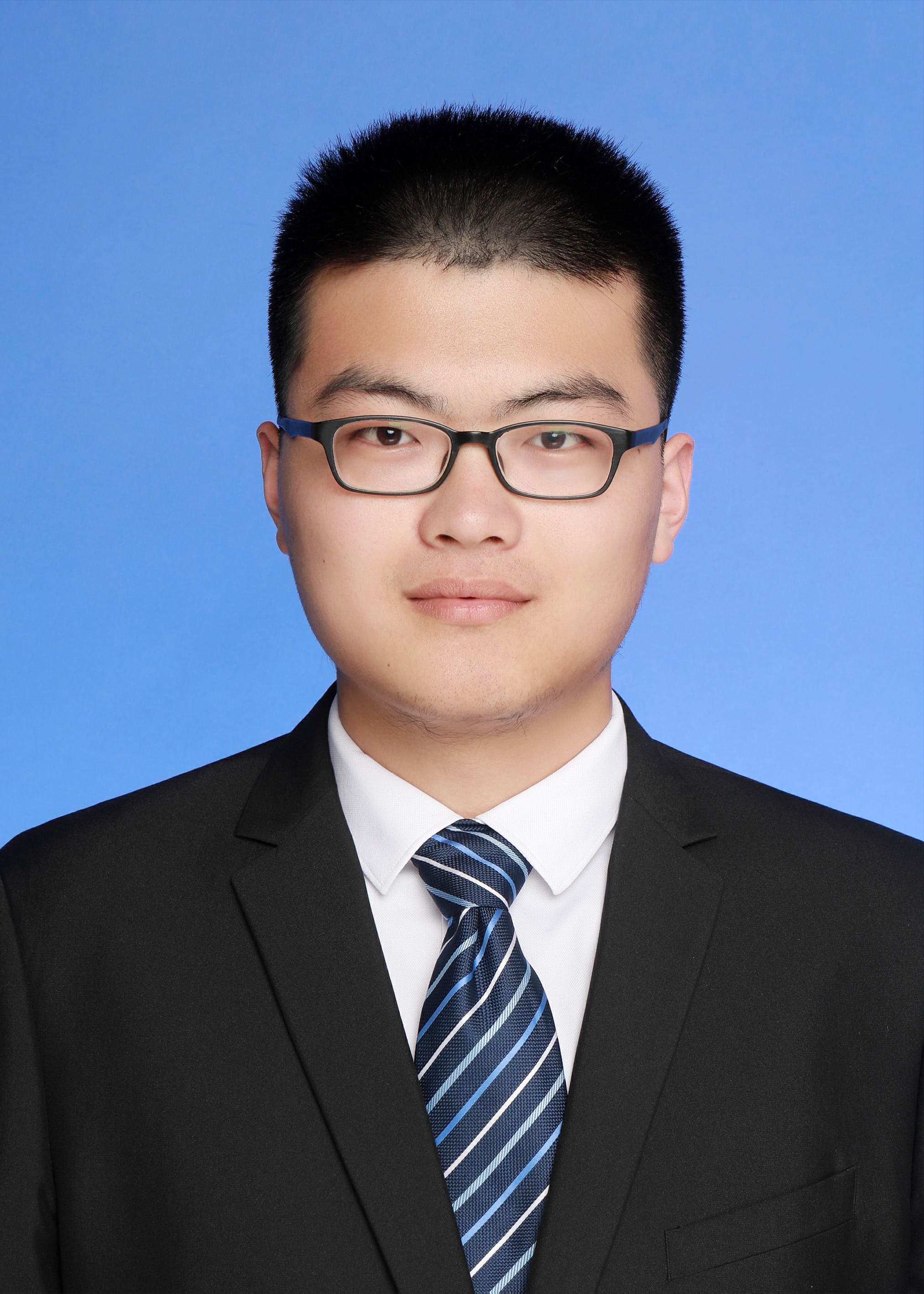}}]{Zeyang Li}received the B.S. degree in mechanical engineering in 2021, from the School of Mechanical Engineering, Shanghai Jiao Tong University, Shanghai, China. He is currently working toward the M.S. degree in mechanical engineering with the Department of Mechanical Engineering, Tsinghua University, Beijing, China. His research interests include reinforcement learning and optimal control.
\end{IEEEbiography}

\vspace{50 mm}

\begin{IEEEbiography}[{\includegraphics[width=1in,height=1.25in,clip,keepaspectratio]{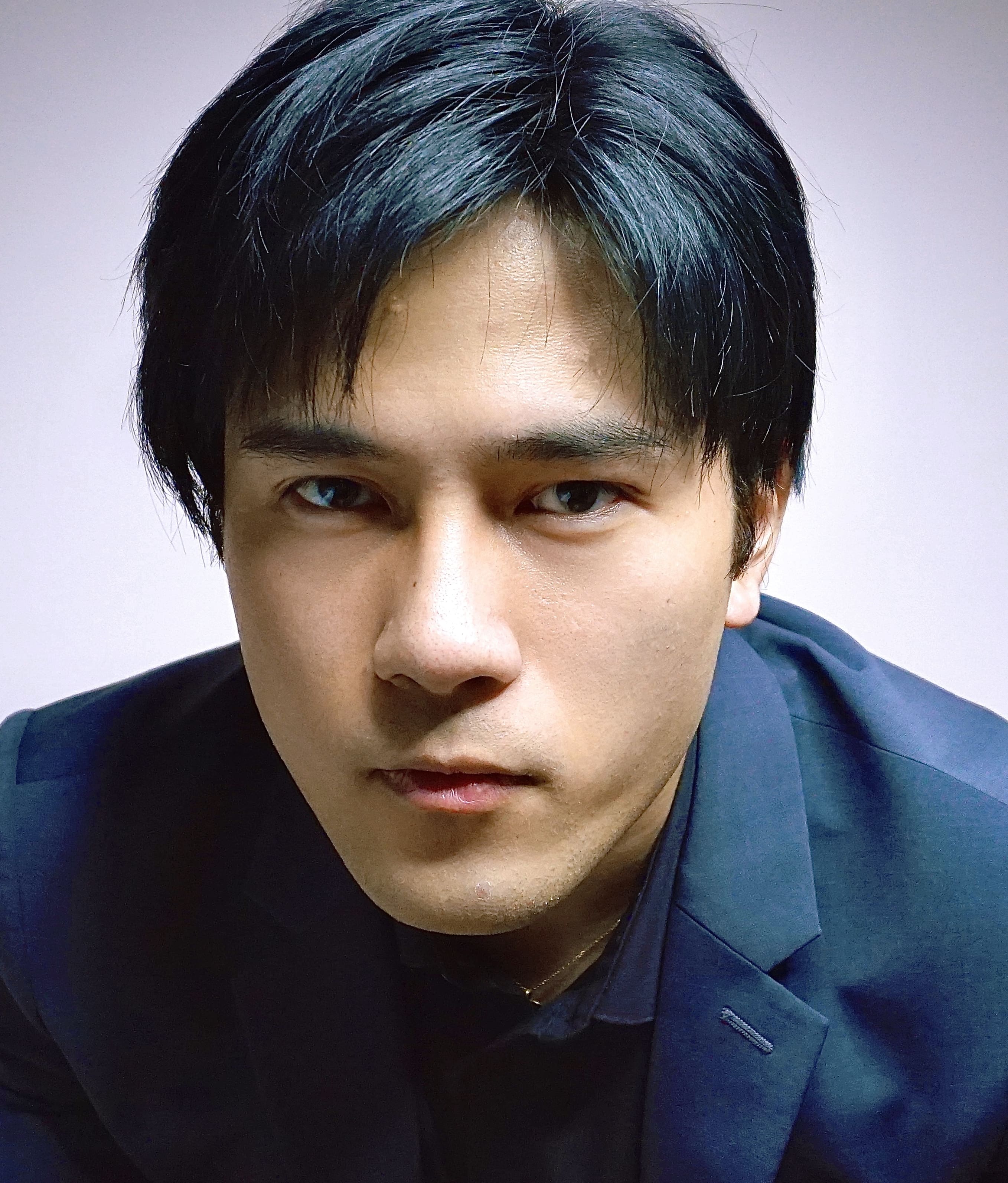}}]{Weiye Zhao}is a fifth-year PhD candidate at the Robotics Institute of Carnegie Mellon University, with a primary research focus on providing provable safety guarantees for robots. Specifically, he is interested in safe reinforcement learning, nonconvex optimization, and safe control. He has published in numerous renowned venues, including IJCAI, ACC, CDC, AAAI, CoRL, L4DC, RCIM, LCSS, and others. He serves on the program committee for several conferences, including IV, AAAI, IJCAI, CoRL, CDC, ACC, L4DC, NeurIPS and more.
\end{IEEEbiography}

\begin{IEEEbiography}[{\includegraphics[width=1in,height=1.25in,clip,keepaspectratio]{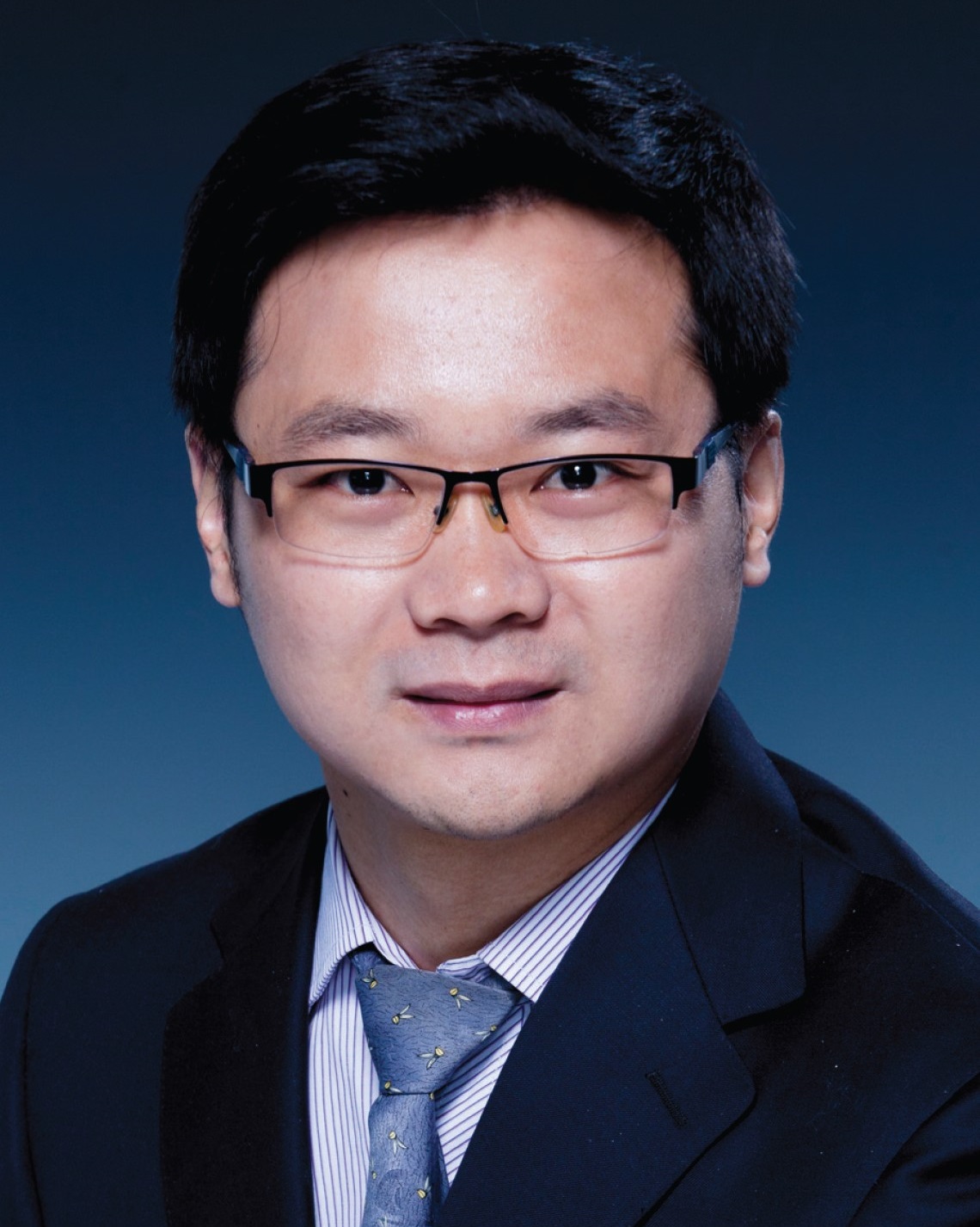}}]{Chuxiong Hu}(Senior Member, IEEE) received his B.S. and Ph.D. degrees in Mechatronic Control Engineering from Zhejiang University, Hangzhou, China, in 2005 and 2010, respectively. He is currently an Associate Professor (tenured) at Department of Mechanical Engineering, Tsinghua University, Beijing, China. From 2007 to 2008, he was a Visiting Scholar in mechanical engineering with Purdue University, West Lafayette, USA. In 2018, he was a Visiting Scholar in mechanical engineering with University of California, Berkeley, CA, USA. His research interests include precision motion control, high-performance multiaxis contouring control, precision mechatronic systems, intelligent learning, adaptive robust control, neural networks, iterative learning control, and robot. Prof. Hu was the recipient of the Best Student Paper Finalist at the 2011 American Control Conference, the 2012 Best Mechatronics Paper Award from the ASME Dynamic Systems and Control Division, the 2013 National 100 Excellent Doctoral Dissertations Nomination Award of China, the 2016 Best Paper in Automation Award, the 2018 Best Paper in AI Award from the IEEE International Conference on Information and Automation, and 2022 Best Paper in Theory from the IEEE/ASME International Conference on Mechatronic, Embedded Systems and Applications. He is now an Associate Editor for the IEEE Transactions on Industrial Informatics and a Technical Editor for the IEEE/ASME Transactions on Mechatronics.
\end{IEEEbiography}

\begin{IEEEbiography}[{\includegraphics[width=1in,height=1.25in,clip,keepaspectratio]{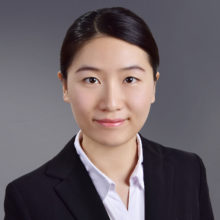}}]{Changliu Liu} is an assistant professor in the Robotics Institute, Carnegie Mellon University (CMU), where she leads the Intelligent Control Lab. Prior to joining CMU, Dr. Liu was a postdoc at Stanford Intelligent Systems Laboratory. She received her Ph.D. from University of California at Berkeley and her bachelor degrees from Tsinghua University. Her research interests lie in human-robot collaborations. Her work is recognized by NSF Career Award, Amazon Research Award, and Ford URP Award.
\end{IEEEbiography}

\vfill

\end{document}